\documentclass[letterpaper, 10 pt, conference]{ieeeconf}
\IEEEoverridecommandlockouts
\overrideIEEEmargins

\usepackage{amsmath,bm}
\usepackage{amsfonts}
\usepackage{amssymb}
\usepackage{graphicx}
\usepackage[dvipsnames]{xcolor}
\makeatletter
\let\MYcaption\@makecaption
\makeatother

\usepackage[font=footnotesize]{subcaption}

\makeatletter
\let\@makecaption\MYcaption
\makeatother

\usepackage{hyperref}

\newtheorem{Theorem}{Theorem}

\newtheorem{Lemma}{Lemma}
\newtheorem{Problem}{Problem}
\newtheorem{Corollary}{Corollary}
\newtheorem{Remark}{Remark}
\newtheorem{Assumption}{Assumption}
\newcommand{\m}[1]{\mathbf{#1}}
\newcommand{\mc}[1]{\mathcal{#1}}
\newcommand{\mb}[1]{\mathbb{#1}}

\begin{document}

\title{\LARGE \bf Consensus seeking in diffusive multidimensional networks with a repeated interaction pattern and time-delays}
\author{Hoang Huy Vu\authorrefmark{1}, Quyen Ngoc Nguyen\authorrefmark{1},  Tuynh Van Pham\authorrefmark{1}, Chuong Van Nguyen\authorrefmark{2}, Minh Hoang Trinh\authorrefmark{3}
\thanks{${}^{*}$Department of Automation Engineering, School of Electrical and Electronic Engineering, Hanoi University of Science and Technology (HUST), Hanoi, Vietnam. E-mails: \tt\footnotesize{\{hoang.vh242063m,quyen.nn200518\} @sis.hust.edu.vn, 
tuynh.phamvan@hust.edu.vn}.}
\thanks{${}^{\dagger}$Viterbi School of Engineering – Department of Aerospace and Mechanical Engineering, University of Southern California, USA. E-mail: \tt\footnotesize{vanchuong.nguyen@usc.edu}.}
\thanks{${}^{\ddagger}$AI Department, FPT University, Quy Nhon AI Campus, An Phu Thinh New Urban Area, Quy Nhon City, Nhon Binh Ward, Binh Dinh 55117, Vietnam. Corresponding author. E-mail: \tt\footnotesize{minhth30@fpt.edu.vn}.}
}

\maketitle
\thispagestyle{empty}
\begin{abstract}
This paper studies a consensus problem in multidimensional networks having the same agent-to-agent interaction pattern under both intra- and cross-layer time delays. Several conditions for the agents to asymptotically reach a consensus are derived, which involve the overall network's structure, the local interacting pattern, and the assumptions specified on the time delays. The validity of these conditions is proved by direct eigenvalue evaluation and supported by numerical simulations.
\end{abstract}
\begin{keywords}
 consensus, matrix-weighted consensus, graph theory, multi-layer networks
\end{keywords}

\section{Introduction}
\label{sec:intro}
In recent years, the demand for understanding complex and large-scale structures such as social, traffic, material, or brain networks has raised increasing attention on modeling, analyzing, and synthesizing multilayer networks. A multilayer network consists of multiple agents (or subsystems) interacting via $d$-dimensional single-layer networks $(d\ge 2)$. The dynamic of each agent is captured by a set of state variables corresponding to $d$ layers, and the agent-to-agent influences govern the entire network's behavior. If the network contains only \emph{intra-layer interactions} (interactions of state variables belonging to the same layer) and \emph{inter-layer interactions} (couplings of state variables from different layers of the same agent), we refer to the network as a \emph{multiplex}. A multilayer network contains \emph{cross-layer interactions} - the couplings of the state variables associated with different network layers and from distinct agents. In a diffusive network, the variation of each agent's state variables depends on the differences between the agent's state and its neighboring agents. Figure~\ref{fig:2layerNetw} illustrates a two-layer network of four agents.

Consensus algorithms, because of their simplicity and generality, were extensively used to study the dynamics of single-layer networks (monoplexes) \cite{Olfati2007consensuspieee}. For multilayer networks, \cite{He2017TSMC,Lee2017consensus,Sorrentino2020group} considered the consensus and synchronization on multiplexes. The authors in  \cite{Trinh2018Aut} proposed matrix-weighted consensus in which the state variable in a layer of an agent is updated based on a weighted sum of relative states from every layer, taken for all neighboring agents. Matrix-weighted consensus has found applications in modeling of multidimensional opinion dynamics and networked economic network, see e.g. \cite{Ahn2020opinion,Ye2020Aut,Trinh2024networked}. Time delay is a source of uncertainty that usually negatively affects the performance of communication networks \cite{Fridman2014tutorial}. Consequently, considerable attention has been devoted to the analysis of delayed single-layer consensus networks~
\cite{Olfati2007consensuspieee,Sun2008average,Cepeda2011exact}. In contrast, the investigation of time-delay effects in multilayer consensus systems remains relatively limited. Related works include the study of consensus and synchronization in time-delayed multiplex networks presented in \cite{Singh2015synchronization}, and the analysis of matrix-weighted consensus algorithms under time delays in \cite{Lam2024consensus}.

\begin{figure}[t]
    \centering
   \includegraphics[width=0.8\linewidth]{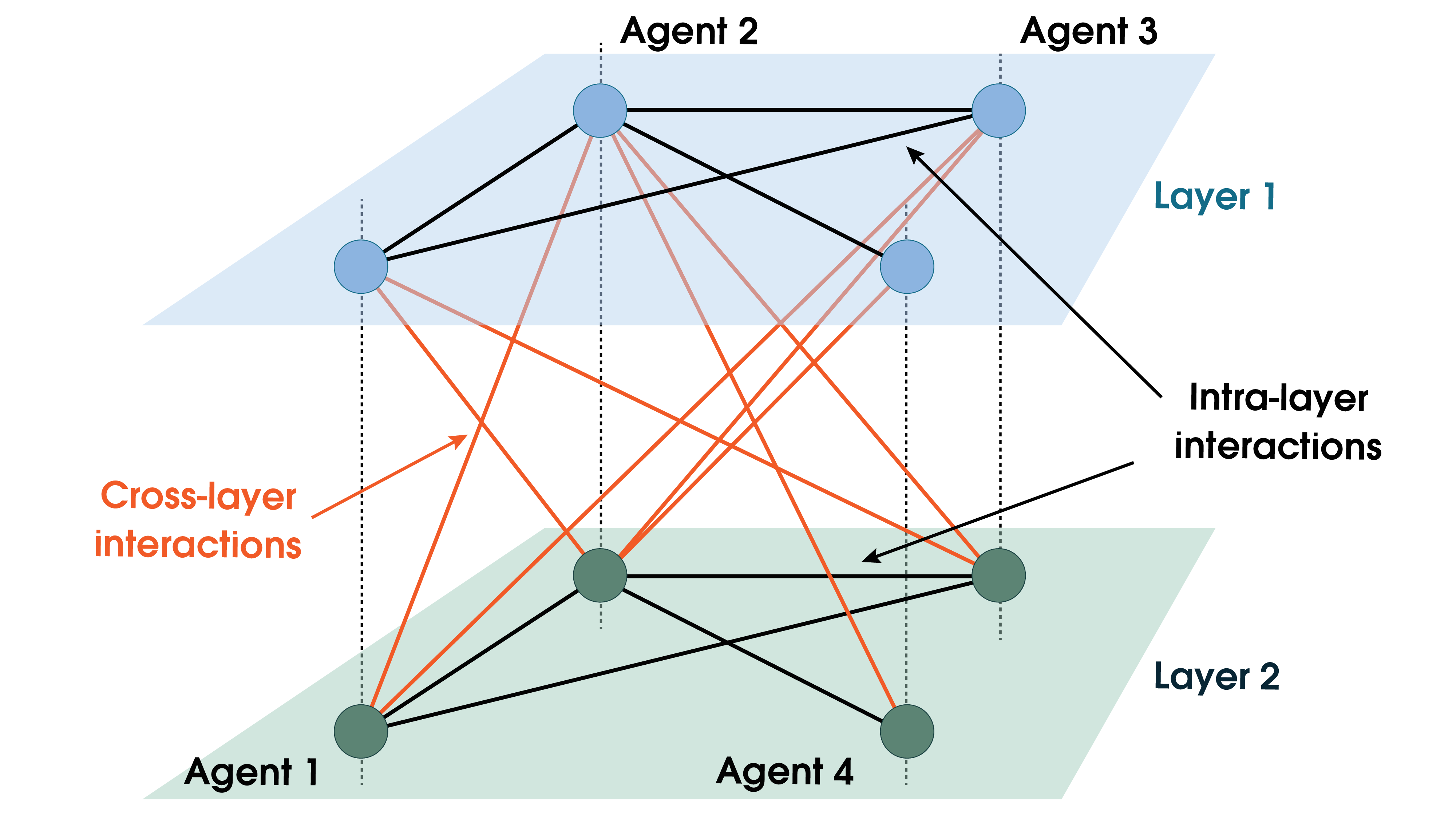}
   \caption{A two-layer network of four agents}
   \label{fig:2layerNetw}
\end{figure}
This paper attempts to derive consensus conditions for the matrix-weighted consensus model \cite{Trinh2018Aut} with heterogeneous time delays. We assume that all agent-to-agent interactions have the same graph pattern and time delays may exist in both intra- and cross-layer interactions. We first prove that if there is no intra-layer time delay and the maximum magnitude of all eigenvalues of the adjacency matrix corresponding to cross-layer interactions is smaller than unity, the network globally asymptotically achieves a consensus. The result reveals that significant delays from weak cross-layer interactions cannot destabilize a multilayer consensus network. Second, the network is considered with both intra- and cross-layer time delays. The stability of an equation involving two different constant time delays was considered in \cite{Ruan2003OnTZ}, where the authors derived a necessary and sufficient condition based on the root of a transcendental equation. In this paper, due to the interweaving of the overall network and the local interaction graph, the method developed in \cite{Ruan2003OnTZ} cannot be applied. Instead, we examine the network when two-time delays are equal and determine a delay margin. Then, we show that when intra- and cross-layer time delays do not exceed a calibrated delay margin, the network asymptotically achieves a consensus. The delay margin is tight in the sense that once one of the time delays exceeds the margin and the other is equal to the delay margin, the consensus network is unstable. Finally, the delayed two-layer network is considered and several detailed consensus conditions are determined by the graph's parameters.

The remainder of this paper is organized as follows. Section~\ref{sec:modeling} contains theoretical background and problem formulation. Section~\ref{sec:consensus_condition} provides consensus conditions and the corresponding analysis for networks with a general interaction pattern matrix. Then, two-layer networks are considered and simulated in Section~\ref{sec:twolayer_network}. Lastly, Section~\ref{sec:conclusion} concludes the paper.

\section{Problem formulation}
\label{sec:modeling}
Consider a diffusive multi-layer network of $n\ge2$ agents. Each agent $i\in \{1,\ldots,n\}$ is a $d$-dimensional subsystem with the state vector $\m{x}_i=[x_{i1},\ldots,x_{id}]^\top \in\mb{R}^d$ $(d\geq 2)$. The interaction between two agents $i$ and $j$ in the network is modeled by a matrix weight $\m{A}_{ij}=\m{A}_{ji} \in \mb{R}^{d\times d}$, 
{\small
\begin{align}
    \m{A}_{ij} = [a_{ij}^{pq}]_{d\times d} =
    \begin{bmatrix}
        a_{ij}^{11} & a_{ij}^{12} & \ldots & a_{ij}^{1d}\\
        a_{ij}^{21} & a_{ij}^{22} & \ldots & a_{ij}^{2d}\\
        \vdots & \vdots &  & \vdots \\
        a_{ij}^{d1} & a_{ij}^{d2} & \ldots & a_{ij}^{dd}
    \end{bmatrix},
\end{align}}
where $i, j \in \{1,\ldots,n\}$, $i\neq j$, and $p, q \in \{1,\ldots,d\}$. Thus, each element $a_{ij}^{pq} = a_{ji}^{pq} \in \mb{R}$ captures the influence from the $p$-th layer to the $q$-th layer in the network. The elements $a_{ij}^{pp},~p=1,\ldots,d,$ represent intra-layer (same layer) interactions, while the elements $a_{ij}^{pq},~p\neq q$ stand for cross-layer interactions between two agents $i,~j$. We use an undirected graph $\mc{G}=(\mc{V},\mc{E})$ to describe the interaction topology between agents in the network, i.e., each agent $i$ is represented by a vertex $i$ in the vertex set $\mc{V}=\{1,\ldots,n\}$, each edge $(i,j) \in \mc{E}$ exists if the matrix weight $\m{A}_{ij} \neq \m{0}_{d\times d}$. Assume that the graph $\mc{G}$ does not exist any self-loop, i.e., edges connecting a vertex with itself. The undirectedness assumption implies that if $(i,j) \in \mc{E}$, then $(j,i)\in \mc{E}$. Let $\mc{N}_i=\{j \in \mc{V}|(i,j)\in \mc{E}\}$ denote the neighbor set of vertex $i$. A path $i_1i_2\ldots i_k$ in $\mc{G}$ is a sequence of vertices $i_r \in \mc{V},~\forall r=1,\ldots,k,$ joining the starting vertex $i_1$ to the end vertices $i_k$ by $k-1$ edges $(i_r,i_{r+1}) \in \mc{E}$, $r=1,\ldots,k-1$. The graph $\mc{G}$ is \emph{connected} if and only if for any pairs of vertices in $\mc{V}$, there exists a path joining them.

Define the network adjacency matrix $\m{W} = [w_{ij}] \in \mb{R}^{n \times n}$ of $\mc{G}$ with elements $w_{ij}=1$ if $(i,j) \in \mc{E}$ and $w_{ij}=0$, otherwise. Then, the \emph{Laplacian matrix} of $\mc{G}$ is defined as $\m{L} = [l_{ij}] = \text{diag}(\m{W}\m{1}_n)-\m{W} \in \mb{R}^{n\times n}$.

\begin{figure}
    \centering
    \subfloat[$\mc{G}$]{\includegraphics[height=2.5cm]{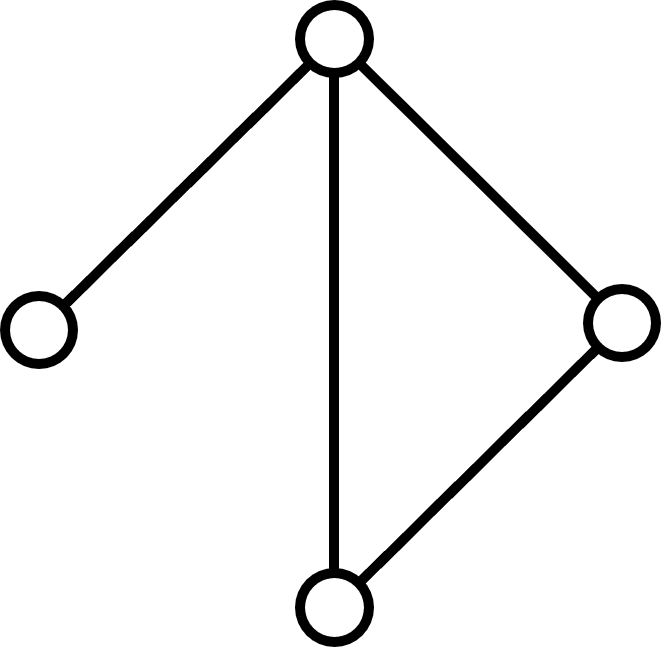}}\qquad\qquad
    \subfloat[$\mc{H}$]{\includegraphics[height=2.5cm]{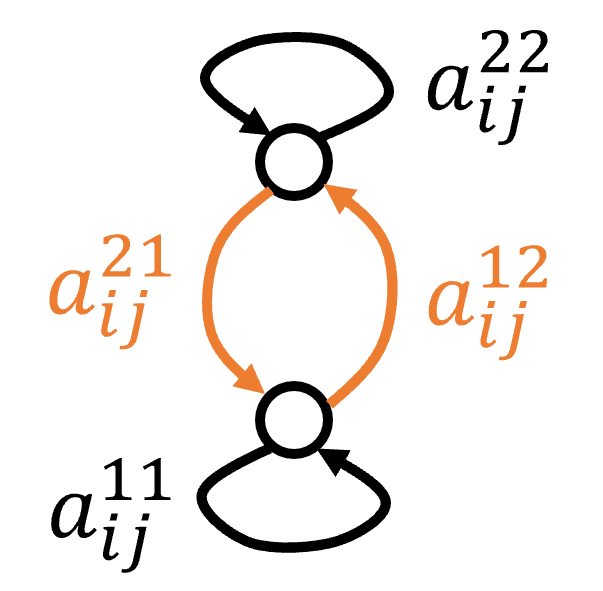}}
    \hfill
    \caption{(a) - The network graph $\mc{G}=(\mc{V},\mc{E})$, and (b) - the interaction pattern $\mc{H}$ corresponding to the two-layer network in Fig.~\ref{fig:2layerNetw}.}
    \label{fig:G-and_H}
\end{figure}

We consider the consensus algorithm in the multi-layer network, where each agent updates its state variables according to a weighted sum of both intra-layer and cross-layer state differences. Moreover, it is supposed that there are time-delays in both the intra- and cross-layer interactions specified by $\tau_1, \tau_2 \geq 0$, respectively. Subsequently, the equation governing the network's evolution under the matrix-weighted consensus algorithm is given as
\begin{align} \label{eq:consensus_with_delay}
    \dot{x}_{iq}(t) &= \underbrace{\sum_{j=1}^n a_{ij}^{qq}(x_{jq}(t-\tau_1) - x_{iq}(t-\tau_1))}_{\text{intra-layer interactions}} \nonumber\\ 
    &\quad+ \underbrace{\sum_{j=1}^n\sum_{p=1,p\neq q}^d a_{ij}^{pq}(x_{jp}(t-\tau_2) - x_{ip}(t-\tau_2))}_{\text{cross-layer interactions}},
\end{align}
for all $i\in \mc{V}$ and $q \in \{1,\ldots, d\}$. 

We assume that the multilayer network can be represented by a pair of graphs: the network graph $\mc{G}$ (the global graph), and an agent-to-agent interaction graph $\mc{H}$ (the local graph). The graph $\mc{H}$ encodes the mutual influences between the relative states of any two adjacent agents in the network. The multi-layer networks with a repeated pattern are described in the following assumptions.
\begin{Assumption} \label{assumption:1}
    The network graph $\mc{G}$ is undirected and connected.
\end{Assumption}

\begin{Assumption} \label{assumption:2}
    The interaction matrices are the same $\m{A}_{ij}=\m{A}=[a^{pq}]_{d\times d}$ for all edges $(i,j)\in \mc{E}$. Furthermore, all intra-layer interactions have the same weight $a_{ij}^{qq}=1,~\forall (i,j)\in \mc{E},~\forall q=1,\ldots,d$.
\end{Assumption}

Under Assumptions~\ref{assumption:1}--\ref{assumption:2}, we have $\m{A}_{ij}=\m{A}_{ji}=\m{A}=[a_{ij}]_{d\times d}$ and $\mc{H}$ is a graph of $d$ vertices with self-loops. Moreover, as all agent-to-agent interactions are represented by the same graph $\mc{H}$, we will refer to $\mc{H}$ as the \emph{interaction pattern} of the network. Figure~\ref{fig:2layerNetw} illustrates a four-agent network consisting of the network graph $\mc{G}$ and the interaction pattern $\mc{H}$.

Note that the consensus algorithm \eqref{eq:consensus_with_delay} without time delays has been studied in \cite{Trinh2018Aut} under the assumption of symmetric cross-layer interactions and positive semi-definiteness of the interaction matrix ($\m{A}_{ij}=\m{A}_{ij}^\top\geq 0$). In \cite{Ahn2020opinion}, the delay-free model \eqref{eq:consensus_with_delay} is considered when the matrix weights $a_{ij}^{pq}=a_{ji}^{pq}$ is a function of the relative state variable $(x_{jp}-x_{iq})$ and several convergence results were derived.
 
The problem studied in this paper can be stated as follows.
\begin{Problem}
    Consider a network satisfying the Assumptions~\ref{assumption:1} and \ref{assumption:2}. Determine the condition for the system \eqref{eq:consensus_with_delay} in terms of the intra- and cross-layer time delays $\tau_1$ and $\tau_2$ to asymptotically achieve a consensus.
\end{Problem}

\section{Consensus conditions}
\label{sec:consensus_condition}
In this section, we consider the delayed consensus network~\eqref{eq:consensus_with_delay} and derive consensus conditions under different assumptions of the time-delays.
\subsection{Delay-free network}
To derive a consensus condition, the system without time delays $\tau_1 = \tau_2 = 0$ is investigated. The following theorem gives a necessary and sufficient consensus condition. 
\begin{Theorem}
    Suppose that Assumptions~\ref{assumption:1} and \ref{assumption:2} hold. The $n$-agent system \eqref{eq:consensus_with_delay} with $\tau_1=\tau_2 = 0$ achieves a consensus if and only if the interaction matrix $-\m{A}$ is Hurwitz.
\end{Theorem}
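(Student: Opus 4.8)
The plan is to reduce the delay-free dynamics to a compact Kronecker-product form and then diagonalize the Laplacian to decouple the network into $n$ independent $d$-dimensional subsystems whose stability is governed entirely by $\m{A}$. First I would set $\tau_1=\tau_2=0$ and apply Assumption~\ref{assumption:2}: writing $a_{ij}^{pq}=w_{ij}a^{pq}$ with $a^{qq}=1$, the two sums in \eqref{eq:consensus_with_delay} merge into $\dot{x}_{iq}=\sum_{j}w_{ij}\sum_{p}a^{pq}(x_{jp}-x_{ip})$. Collecting these over $q=1,\ldots,d$ and using $\sum_j w_{ij}(\m{x}_i-\m{x}_j)=\sum_j l_{ij}\m{x}_j$ gives $\dot{\m{x}}_i=-\m{A}^\top\sum_{j}l_{ij}\m{x}_j$. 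Stacking $\m{x}=[\m{x}_1^\top,\ldots,\m{x}_n^\top]^\top$ yields the linear system
\[
\dot{\m{x}} = -(\m{L}\otimes\m{A}^\top)\,\m{x}.
\]

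Next, Assumption~\ref{assumption:1} guarantees that $\m{L}$ is symmetric positive semidefinite with a simple zero eigenvalue $\lambda_1=0$ (eigenvector $\m{1}_n$) and $0<\lambda_2\le\cdots\le\lambda_n$. Diagonalizing $\m{L}=\m{S}\,\mathrm{diag}(\lambda_1,\ldots,\lambda_n)\,\m{S}^\top$ with $\m{S}$ orthogonal and changing coordinates via $\m{y}=(\m{S}^\top\otimes\m{I}_d)\m{x}$ block-diagonalizes the system into the decoupled subsystems
\[
\dot{\m{y}}_i = -\lambda_i\,\m{A}^\top\m{y}_i,\qquad i=1,\ldots,n.
\]
The mode $i=1$ obeys $\dot{\m{y}}_1=\m{0}$, so the average $\tfrac1n\sum_i\m{x}_i$ is invariant and serves as the consensus value, and the consensus subspace $\m{1}_n\otimes\mathbb{R}^d$ is exactly the span of this mode. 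Hence consensus is equivalent to $\m{y}_i(t)\to\m{0}$ for every $i\ge 2$.

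For each $i\ge 2$ the subsystem is asymptotically stable iff $-\lambda_i\m{A}^\top$ is Hurwitz, i.e. every eigenvalue $\mu$ of $\m{A}$ satisfies $\lambda_i\,\mathrm{Re}(\mu)>0$. Since $\lambda_i>0$ and $\m{A}$ and $\m{A}^\top$ share the same spectrum, this holds for all $i\ge 2$ precisely when $\mathrm{Re}(\mu)>0$ for every eigenvalue of $\m{A}$, which is the statement that $-\m{A}$ is Hurwitz. For the converse, if some eigenvalue of $\m{A}$ has nonpositive real part, then for any $i\ge 2$ the corresponding mode of $\dot{\m{y}}_i=-\lambda_i\m{A}^\top\m{y}_i$ does not decay, so there are initial conditions whose disagreement persists and consensus fails. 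I expect the only delicate point to be the case where $\m{A}$ is not diagonalizable; this is handled by invoking the Hurwitz criterion directly on the matrix exponential $e^{-\lambda_i\m{A}^\top t}$, whose convergence to zero depends only on the eigenvalue real parts and not on the Jordan structure.
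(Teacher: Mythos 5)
Your proof is correct and follows essentially the same route as the paper's: write the delay-free system in Kronecker-product form, diagonalize the Laplacian of the connected graph to split off the consensus mode ($\lambda_1=0$) from the disagreement modes, and conclude that all disagreement modes decay if and only if $-\m{A}$ is Hurwitz. The only cosmetic differences are that you carry $\m{A}^\top$ (which the paper writes as $\m{A}$ --- immaterial, since they share the same spectrum) and that you invoke the matrix-exponential Hurwitz criterion on each block $-\lambda_i\m{A}^\top$ instead of the paper's explicit Jordan decomposition of $\m{A}$.
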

\begin{proof}
    The $n$-agent system can be represented in matrix form as follows
\begin{align}
    \dot{\m{x}}(t) = -(\m{L}\otimes \m{A})\m{x}(t),
\end{align}
where $\m{x}=\text{vec}(\m{x}_1,\ldots,\m{x}_n)$. Let $\m{U}$ be the orthonormal matrix that diagonalizes $\m{L}$, i.e., $\m{L}= \m{U}\m{\Lambda}\m{U}^\top$, where $\text{diag}(0,\lambda_2,\ldots,\lambda_n)$, and $0<\lambda_2\leq \ldots \leq \lambda_n$ are eigenvalues of $\m{L}$. Let $\zeta_1,\ldots,\zeta_d$ be eigenvalues of $\m{A}$, and suppose that $\m{A}$ has a Jordan normal form $\m{A}=\m{T}_{\rm A}\m{J}_{\rm A}\m{T}^{-1}_{\rm A}$. Then,  matrix $\m{L}\otimes \m{A} = (\m{U}\bm{\Lambda}\m{U}^\top) \otimes (\m{T}_{\rm A}\m{J}_{\rm A}\m{T}^{-1}_{\rm A}) = (\m{U}\otimes\m{T}_{\rm A})(\bm{\Lambda\otimes\m{J}_{\rm A}})(\m{U}^\top \otimes \m{T}^{-1}_{\rm A})$ has eigenvalues $\lambda_i \eta_j$, $i=1,\ldots,n,$ and $j=1,\ldots,d$. Taking the change of variables $\m{z}=(\m{U}^\top \otimes \m{T}^{-1}_{\rm A})\m{x}$, it follows that
\begin{align} \label{eq:z_variable}
    \dot{\m{z}}(t)= -(\bm{\Lambda}\otimes \m{J}_{\rm A})\m{z}(t),
\end{align}
Noting that $[z_1,\ldots,z_d]^\top$ corresponds to the consensus space im$(\m{1}_n\otimes\m{I}_d)$, and these variables remain unchanged under \eqref{eq:z_variable}. Meanwhile, $\m{z}'=[z_{d+1},\ldots,z_{dn}]^\top$ corresponds to the disagreement space, $\dot{\m{z}}'=-(\bm{\Lambda}'\otimes \m{A}){\m{z}}'$, and $\bm{\Lambda}'={\rm diag}(\lambda_2,\ldots,\lambda_n)$. 
Thus, if at least an eigenvalue $\zeta_k$ of $\m{A}$ has a nonpositive real part, then $-\lambda_i \zeta_k$ has nonnegative real part. It follows that $\m{z}'$ may not converge to zero or even grow unbounded. In contrast, if all eigenvalues of $-\m{A}$ has negative real parts, then $-(\bm{\Lambda}'\otimes \m{A})$ is Hurwitz, $\m{z}'(t)$ exponentially converges to $\m{0}_{dn-d}$. Equivalently, the system exponentially achieves a consensus if and only $-\m{A}$ is Hurwitz.
\end{proof}

\subsection{Network with cross- and intra-layer time-delays}
We consider the matrix-weighted consensus network in the presence of under delays. Using the notation $\m{A}_{\rm cross} = \m{A} - \text{diag}(a^{11},\ldots,a^{dd}),$ the equation \eqref{eq:consensus_with_delay} can be written for each agent $i$ as follows
\begin{align}
    \dot{\m{x}}_i(t) &= \sum_{j=1}^n (\m{x}_j(t-\tau_1)-\m{x}_i(t-\tau_1)) \nonumber \\& \quad+ \sum_{j=1}^n \m{A}_{\rm cross} (\m{x}_j(t-\tau_2) -\m{x}_i(t-\tau_2)),
\end{align}
for $i=1,\ldots,n$. 

The $n$-agent network can be  written as follows
\begin{align} \label{eq:netw_matrix_form}
    \dot{\m{x}}(t) = - (\m{L}\otimes \m{I}_d)\m{x}(t-\tau_1) - (\m{L}\otimes \m{A}_{\rm cross})\m{x}(t-\tau_2),
\end{align}
Taking the Laplace transformation of the system~\eqref{eq:consensus_with_delay} under the assumption that all initial conditions are zero gives
\begin{align*}
    (s\m{I}_{dn}+(\m{L}\otimes\m{I}_d) e^{-\tau_1 s}+ \m{L}\otimes\m{A}_{\rm cross}e^{-\tau_2 s})\m{X}(s)=\m{0}_{dn}.
\end{align*}
Since each zero of the matrix $s\m{I}_{dn}+\m{L}\otimes\m{I}_d e^{-\tau_1 s}+ (\m{L}\otimes\m{A}_{\rm cross})e^{-\tau_2 s}$ is correspondingly a pole of the system \eqref{eq:netw_matrix_form}, we have the following result on the consensus condition of the delay consensus system~\eqref{eq:netw_matrix_form}. 

\begin{Lemma}\label{lm:poly}
    The system~\eqref{eq:netw_matrix_form} globally asymptotically achieves a consensus if and only if the  polynomials $\text{det}((s + \lambda_i e^{-\tau_1 s})\m{I}_d + \lambda_i \m{A}_{\rm cross}e^{-\tau_2 s})$, $i=2,\ldots,n$ are Hurwitz.
\end{Lemma}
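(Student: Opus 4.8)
The plan is to exploit the symmetry of the Laplacian to block-diagonalize the system's characteristic matrix. Under Assumption~\ref{assumption:1}, $\m{L}$ is symmetric positive semidefinite with a simple zero eigenvalue, hence admits an orthonormal diagonalization $\m{L}=\m{U}\bm{\Lambda}\m{U}^\top$ with $\bm{\Lambda}=\mathrm{diag}(0,\lambda_2,\ldots,\lambda_n)$. Writing the characteristic matrix as $\Phi(s)=s\m{I}_{dn}+(\m{L}\otimes\m{I}_d)e^{-\tau_1 s}+(\m{L}\otimes\m{A}_{\rm cross})e^{-\tau_2 s}$, I note that the delay terms enter only through Kronecker products with $\m{L}$, so the congruence by $\m{U}^\top\otimes\m{I}_d$ commutes with them. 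Using the identity $(\m{U}^\top\otimes\m{I}_d)(\m{L}\otimes\m{C})(\m{U}\otimes\m{I}_d)=\bm{\Lambda}\otimes\m{C}$ for $\m{C}\in\{\m{I}_d,\m{A}_{\rm cross}\}$, the transformed matrix $(\m{U}^\top\otimes\m{I}_d)\Phi(s)(\m{U}\otimes\m{I}_d)$ becomes block diagonal with the $d\times d$ blocks $\Phi_i(s)=(s+\lambda_i e^{-\tau_1 s})\m{I}_d+\lambda_i\m{A}_{\rm cross}e^{-\tau_2 s}$, $i=1,\ldots,n$.

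Since $\m{U}$ is orthonormal, $\det\Phi(s)=\prod_{i=1}^n\det\Phi_i(s)$, so the poles of~\eqref{eq:netw_matrix_form} are exactly the zeros of these $n$ quasi-polynomials. I would then isolate the block $i=1$: with $\lambda_1=0$ it collapses to $s\m{I}_d$, contributing $d$ roots at the origin. These marginal roots are the invariant consensus directions spanning $\mathrm{im}(\m{1}_n\otimes\m{I}_d)$, and they encode the preserved agreement value rather than any instability.

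It remains to establish that global asymptotic consensus is equivalent to the decay of the disagreement components, which are governed by the blocks $i=2,\ldots,n$. Decomposing the state through $\m{U}^\top\otimes\m{I}_d$ into its consensus part and orthogonal complement, exactly as in the proof of the delay-free theorem, the disagreement dynamics are driven solely by $\Phi_2(s),\ldots,\Phi_n(s)$. Invoking the spectral theory of retarded functional differential equations, the disagreement solution converges to zero for every admissible initial history if and only if all zeros of $\prod_{i=2}^n\det\Phi_i(s)$ lie in the open left half-plane, i.e., each quasi-polynomial $\det((s+\lambda_i e^{-\tau_1 s})\m{I}_d+\lambda_i\m{A}_{\rm cross}e^{-\tau_2 s})$, $i=2,\ldots,n$, is Hurwitz. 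Together with the invariance of the $i=1$ modes, this delivers the stated equivalence.

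The step I expect to be the main obstacle is the rigorous link between the root-location (spectral) condition and genuine asymptotic convergence for this infinite-dimensional delay system: unlike the finite-dimensional case, one must appeal to the completeness of the generalized eigenfunctions and to the fact that the spectrum of a retarded functional differential equation determines its asymptotic behavior. Additional care is required to confirm that the $d$ roots at the origin remain confined to the consensus subspace and do not couple into the disagreement modes, so that they reflect convergence to a common value rather than a marginal instability.
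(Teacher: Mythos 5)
Your proposal is correct and follows essentially the same route as the paper's own proof: diagonalize $\m{L}$ via the orthonormal transformation $\m{U}\otimes\m{I}_d$, factor the characteristic determinant into $s^d\prod_{i=2}^n\det\bigl((s+\lambda_i e^{-\tau_1 s})\m{I}_d+\lambda_i\m{A}_{\rm cross}e^{-\tau_2 s}\bigr)$, identify the $d$ roots at the origin with the invariant consensus subspace ${\rm im}(\m{1}_n\otimes\m{I}_d)$, and reduce consensus to the Hurwitz property of the disagreement blocks. You are in fact more explicit than the paper about the one genuinely delicate point — that root locations of a retarded quasi-polynomial determine asymptotic behavior of the infinite-dimensional delay system — which the paper simply asserts without comment.
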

\begin{proof}
    The poles of \eqref{eq:netw_matrix_form} are roots of
\begin{align}
    \text{det}(s\m{I}_{dn}+(\m{L}\otimes\m{I}_d) e^{-\tau_1 s} + \m{L}\otimes\m{A}_{\rm cross}e^{-\tau_2 s}) &= 0 \nonumber\\
    \text{det}(s\m{I}_{dn}+(\m{\Lambda}\otimes\m{I}_d)e^{-\tau_1 s} + \m{\Lambda}\otimes\m{A}_{\rm cross}e^{-\tau_2 s}) &= 0 \nonumber\\
    s^d \prod_{i=2}^{n} \text{det}((s + \lambda_i e^{-\tau_1 s} )\m{I}_d + \lambda_i \m{A}_{\rm cross}e^{-\tau_2 s}) &= 0.\nonumber
\end{align}
Note that for the eigenvalue $s=0$, we can find $d$ independent eigenvectors, which are columns of $\m{1}_n\otimes \m{I}_d$. Since these eigenvectors span the consensus space, the eigenspaces corresponding to the remaining eigenvalues span the disagreement space. Thus, the system \eqref{eq:netw_matrix_form} globally asymptotically achieves a consensus if and only if each equation
\begin{align} \label{eq:characteristic_equation}
    \text{det}((s + \lambda_ie^{-\tau_1 s})\m{I}_d + \lambda_i \m{A}_{\rm cross}e^{-\tau_2 s}) = 0,
\end{align}
$i=2,\ldots,n$, admits only roots with negative real parts.
\end{proof}

Let $\mu_k = a_k + \jmath b_k$, $a_k, b_k \in \mb{R}$, $k=1,\ldots,d,$ be eigenvalues of $\m{A}_{\rm cross}$, and consider the Jordan decomposition of $\m{A}_{\rm cross}$ as $\m{A}_{\rm cross}=\m{T}\m{J}\m{T}^{-1}$, where $\m{T} \in \mb{R}^{d \times d}$. 

Lemmas \ref{lem:Hurwitz_A} and \ref{lem:3_2} will be used to derive a stability result for the system~\eqref{eq:netw_matrix_form}.
\begin{Lemma} \label{lem:Hurwitz_A}
    Suppose that Assumptions~\ref{assumption:1} and \ref{assumption:2} hold. If all eigenvalues $\mu_k$ of $\m{A}_{\rm cross}$ satisfy $|\mu_k| < 1,~\forall i=1,\ldots,d$, then $-\m{A}$ is Hurwitz.
\end{Lemma}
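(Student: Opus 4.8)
The plan is to exploit the special structure enforced by Assumption~\ref{assumption:2}, namely that every intra-layer weight equals unity, so that the diagonal of $\m{A}$ is exactly $\m{I}_d$. Under the definition $\m{A}_{\rm cross} = \m{A} - \text{diag}(a^{11},\ldots,a^{dd})$, this yields the clean decomposition $\m{A} = \m{I}_d + \m{A}_{\rm cross}$. The whole argument then reduces to tracking how the spectrum of $\m{A}_{\rm cross}$ is displaced by this identity shift.

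First I would record that, because $\m{A}$ and $\m{A}_{\rm cross}$ differ only by the identity, they share the same eigenvectors (and the same Jordan structure), so the eigenvalues of $\m{A}$ are precisely $1+\mu_k$, $k=1,\ldots,d$, where $\mu_k = a_k + \jmath b_k$ are the eigenvalues of $\m{A}_{\rm cross}$. Equivalently, the eigenvalues of $-\m{A}$ are $-(1+\mu_k)$.

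Next, I would convert the magnitude hypothesis $|\mu_k| < 1$ into a bound on the real part. Since $|a_k| = |\text{Re}(\mu_k)| \le |\mu_k| < 1$, we obtain $a_k > -1$, and hence $\text{Re}(1+\mu_k) = 1 + a_k > 0$ for every $k$. Therefore each eigenvalue $1+\mu_k$ of $\m{A}$ lies in the open right half-plane; equivalently, every eigenvalue $-(1+\mu_k)$ of $-\m{A}$ has strictly negative real part, which is exactly the assertion that $-\m{A}$ is Hurwitz.

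I do not anticipate a genuine obstacle: the result is essentially a one-line consequence of the identity shift, once the structural fact $\m{A} = \m{I}_d + \m{A}_{\rm cross}$ is in place. The only point requiring care is making explicit the appeal to Assumption~\ref{assumption:2} to guarantee that the diagonal is $\m{I}_d$ rather than an arbitrary diagonal matrix; without that normalization, the eigenvalues of $\m{A}$ would no longer be unit translates of those of $\m{A}_{\rm cross}$, and the clean real-part estimate would break down.
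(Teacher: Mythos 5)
Your proof is correct and follows essentially the same route as the paper: both use the identity shift $\m{A} = \m{I}_d + \m{A}_{\rm cross}$ coming from Assumption~\ref{assumption:2} to identify the eigenvalues of $-\m{A}$ as $-(1+\mu_k)$, and then conclude negativity of the real parts from $|\mu_k|<1$. Your write-up merely makes explicit the elementary estimate $\mathrm{Re}(\mu_k) \ge -|\mu_k| > -1$ that the paper leaves implicit.
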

\begin{proof}
    Since $-\m{A}= -\m{I}_d - \m{A}_{\rm cross}$, the eigenvalues of $-\m{A}$ are correspondingly $-1-\mu_k$, which have nagative real parts due to $|\mu_k| < 1,~\forall i=1,\ldots,d$.
\end{proof}
\begin{Remark} Using the Gershgorin circle theorem, a sufficient condition for $|\mu_k|<1, \forall k$ is $\begin{Vmatrix}\m{A}_{\rm cross}\end{Vmatrix}_\infty<1$.
\end{Remark}
\begin{Lemma}\label{lem:3_2}\cite[Cor. 2.4]{Ruan2003OnTZ}
    Consider the quasi-polynomial 
\begin{align}
&P\left(\lambda, e^{-\lambda \tau_1}, \cdots, e^{-\lambda \tau_m}\right) \nonumber \\
= & \displaystyle\sum_{j=1}^n p_j^{(0)}\lambda^{n-j}
+ \displaystyle\sum_{i=1}^m\left( e^{-\lambda\tau_i}\displaystyle\sum_{j=1}^n p_j^{(i)}\lambda^{n-j}\right),
\end{align}
where $i=1,\ldots,m, j=1,\ldots,n$, $\tau_i \geq 0$ and $p_j^{(i)}$ are constants. As $\left(\tau_1, \tau_2, \cdots, \tau_m\right)$ varies, the sum of the orders of the zeros of $P\left(\lambda, e^{-\lambda \tau_1}, \cdots, e^{-\lambda \tau_m}\right)$ in the open RHP can change only if a zero appears on or crosses the imaginary axis.
\end{Lemma}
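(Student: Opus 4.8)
The plan is to read this as a continuity-of-roots statement and to prove it by a winding-number (argument principle) argument. For each fixed parameter vector $\bm{\tau}=(\tau_1,\ldots,\tau_m)$ with $\tau_i\ge 0$, the quasi-polynomial $P(\lambda,e^{-\lambda\tau_1},\ldots,e^{-\lambda\tau_m})$ is an entire function of $\lambda$, and its coefficients depend analytically on $\bm{\tau}$. I would introduce the integer-valued function $N(\bm{\tau})$, the number of zeros of $P$ in the open right half-plane (RHP) counted with multiplicity, and show that $N$ is \emph{locally constant} on the open set of parameters for which $P$ has no zero on the imaginary axis. Local constancy immediately yields the claim: the sum of the orders of the RHP zeros can change only as $\bm{\tau}$ crosses a value at which a zero lies on the imaginary axis.

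The central tool is the argument principle. Fix a parameter value $\bm{\tau}^{*}$ at which $P$ has no imaginary-axis zero. I would first invoke the fact that, for a retarded-type quasi-polynomial, the zeros with nonnegative real part are uniformly bounded on compact parameter sets; hence there is an $R>0$ such that all RHP zeros lie strictly inside the contour $\Gamma_R$ formed by the imaginary-axis segment $[-\jmath R,\jmath R]$ together with the right semicircle $\{R e^{\jmath\theta}:\theta\in[-\pi/2,\pi/2]\}$. Then
\begin{align*}
    N(\bm{\tau}) = \frac{1}{2\pi \jmath}\oint_{\Gamma_R}\frac{\partial_\lambda P(\lambda,\bm{\tau})}{P(\lambda,\bm{\tau})}\,d\lambda .
\end{align*}
Since $P$ and $\partial_\lambda P$ are jointly continuous (indeed analytic) in $(\lambda,\bm{\tau})$, and $P$ has no zero on $\Gamma_R$ for $\bm{\tau}$ in a neighborhood of $\bm{\tau}^{*}$, the integrand is uniformly bounded on the compact contour $\Gamma_R$ and varies continuously with $\bm{\tau}$; therefore $N(\bm{\tau})$ is continuous near $\bm{\tau}^{*}$.

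Being integer-valued and continuous on a neighborhood of $\bm{\tau}^{*}$, the map $N$ is constant there. Consequently $N$ can change only when $\bm{\tau}$ passes through a value at which a zero touches the contour $\Gamma_R$; by the choice of $R$, which keeps the finitely many RHP zeros strictly interior, such contact can occur only on the imaginary-axis portion of $\Gamma_R$. This delivers the assertion that the sum of the orders of the RHP zeros changes only if a zero appears on, or crosses, the imaginary axis. Counting zeros with multiplicity throughout is exactly what makes $N$ record the ``sum of the orders.''

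The main obstacle I anticipate is the uniform boundedness of the RHP zeros, which is what legitimizes using a single, parameter-independent contour radius $R$. For the retarded-type symbol considered here this boundedness follows from the dominance of the principal (highest-degree, delay-free) term of $P$ for large $|\lambda|$ in the RHP, so that $P(\lambda)\neq 0$ there; for neutral-type symbols one would instead have to track chains of roots approaching vertical asymptotes, which is precisely why the statement is confined to the retarded setting. Once this boundedness is in hand, the continuity of the contour integral and the correct bookkeeping of higher-multiplicity zeros are routine consequences of the argument principle and the analyticity of $P$ in $\lambda$.
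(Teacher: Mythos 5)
The paper does not actually prove this lemma: it is quoted, with a citation, from Ruan and Wei, so there is no internal proof to compare against. Measured against the cited source, your argument-principle proof is essentially the standard one --- Ruan and Wei's own proof is a Rouch\'e-type continuity argument, and Rouch\'e's theorem and your winding-number integral are the same local-constancy mechanism --- so your reconstruction is faithful rather than novel, and its logical skeleton (uniform bound on closed-RHP zeros, parameter-independent contour $\Gamma_R$, continuity of the logarithmic-derivative integral in $\bm{\tau}$, integer-valuedness forcing local constancy, connectedness of paths avoiding the bad set) is sound.

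One caveat, which you half-identified yourself in your final paragraph: as transcribed in the paper, both sums run from $j=1$ to $n$, so the delay-free block and every delayed block have the same top degree $n-1$; literally read, this is a \emph{neutral}-type symbol. For such symbols your key step --- the uniform bound on zeros in the closed right half-plane, obtained from dominance of the delay-free principal term together with $|e^{-\lambda\tau_i}|\le 1$ for $\mathrm{Re}\,\lambda\ge 0$ --- is unavailable, and the conclusion itself can fail, since root chains may approach or enter the right half-plane from infinity and the RHP root count need not even be finite. In Ruan and Wei's original statement the delay-free part is monic of degree $n$ (it carries a leading $\lambda^n$ that no delayed block has), which is exactly the retarded-type hypothesis your proof needs and uses. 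So you should state explicitly that you are proving the retarded-type version, i.e.\ that the paper's transcription dropped the leading $\lambda^n$; with that hypothesis made explicit, your proof is correct as written.
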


First, we consider the situation when intra-layer interactions are delay-free, i.e., $\tau_1 = 0$, and prove the following theorem. 
\begin{Theorem} \label{thm:no-interlayer-delay}
    Suppose that Assumptions~\ref{assumption:1} and \ref{assumption:2} hold, $\tau_1 = 0$, and $|\mu_k|<1$, $\forall k = 1,\ldots, d$. Then, the system~\eqref{eq:netw_matrix_form} globally asymptotically achieves a consensus.
\end{Theorem}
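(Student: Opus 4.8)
The plan is to invoke Lemma~\ref{lm:poly} and reduce the consensus question to showing that each characteristic quasi-polynomial
\begin{align*}
    \det\big((s+\lambda_i)\m{I}_d + \lambda_i \m{A}_{\rm cross}e^{-\tau_2 s}\big), \quad i=2,\ldots,n,
\end{align*}
has all its roots in the open left-half plane for every $\tau_2\ge 0$. Setting $\tau_1=0$ and using the Jordan decomposition $\m{A}_{\rm cross}=\m{T}\m{J}\m{T}^{-1}$, the determinant is invariant under the similarity transformation, and the matrix $(s+\lambda_i)\m{I}_d+\lambda_i\m{J}e^{-\tau_2 s}$ is upper triangular. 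Hence the characteristic equation factors into the scalar quasi-polynomials
\begin{align*}
    s + \lambda_i + \lambda_i \mu_k e^{-\tau_2 s} = 0, \quad i=2,\ldots,n,\; k=1,\ldots,d,
\end{align*}
so it suffices to analyze each of these one-dimensional equations separately.

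First I would establish the base case $\tau_2=0$: the unique root is $s=-\lambda_i(1+\mu_k)$, whose real part is $-\lambda_i(1+a_k)<0$ because $\lambda_i>0$ and $|\mu_k|<1$ forces $a_k>-1$. This is consistent with Lemma~\ref{lem:Hurwitz_A}, which guarantees that $-\m{A}$ is Hurwitz. Next I would apply Lemma~\ref{lem:3_2}: as $\tau_2$ increases from $0$, the number of right-half-plane roots can change only when a root touches or crosses the imaginary axis. The continuity argument is legitimate here because the delay-free term $s$ has strictly higher degree in $s$ than the delayed term $\lambda_i\mu_k$ (the quasi-polynomial is of retarded type), so no root can escape to or emerge from infinity as $\tau_2$ varies.

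The heart of the proof is to rule out any imaginary-axis crossing. Suppose $s=\jmath\omega$ were a root for some $\omega\in\mb{R}$ and some $\tau_2>0$. Rearranging gives $\jmath\omega+\lambda_i = -\lambda_i\mu_k e^{-\jmath\omega\tau_2}$, and taking magnitudes yields $\sqrt{\omega^2+\lambda_i^2} = \lambda_i|\mu_k|$. But $\sqrt{\omega^2+\lambda_i^2}\ge \lambda_i > \lambda_i|\mu_k|$ whenever $|\mu_k|<1$, a contradiction. Therefore no crossing occurs for any $\tau_2\ge 0$, the right-half-plane root count stays at zero for every characteristic factor, and all roots remain in the open left-half plane. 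By Lemma~\ref{lm:poly}, the network globally asymptotically reaches a consensus.

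I expect the main obstacle to lie not in the magnitude estimate—which is the clean, decisive step—but in the careful justification that Lemma~\ref{lem:3_2} is applicable: namely, confirming the principal-term (degree) condition so that the roots depend continuously on $\tau_2$ and neither appear from nor vanish to infinity, which is what licenses the conclusion that the right-half-plane count is conserved in the absence of imaginary crossings.
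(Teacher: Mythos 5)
Your proposal is correct and follows essentially the same route as the paper's proof: reduction via Lemma~\ref{lm:poly} to the scalar quasi-polynomials $s+\lambda_i+\lambda_i\mu_k e^{-\tau_2 s}=0$, stability at $\tau_2=0$, and Lemma~\ref{lem:3_2} to rule out destabilization by showing no imaginary-axis root can exist. Your magnitude estimate $\sqrt{\omega^2+\lambda_i^2}=\lambda_i|\mu_k|<\lambda_i$ is just a more compact rendering of the paper's sum-of-squares computation (its equations for $\cos(\tau_2\omega)$ and $\sin(\tau_2\omega)$ squared and added), and your extra remark on the retarded-type condition legitimizing Lemma~\ref{lem:3_2} is a welcome, if implicit in the paper, justification.
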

\begin{proof}
Each equation $\text{det}((s + \lambda_i)\m{I}_d + \lambda_i \m{A}_{\rm cross}e^{-\tau_2 s})=0,~i=1,\ldots,n,$ is equivalent to $d$ equations
\begin{align} \label{eq:stability_eq_general}
    f_{i,k}(s,e^{-\tau_2 s}) = s+ \lambda_i + \lambda_i (a_k + \jmath b_k) e^{-\tau_2 s}=0,
\end{align}
Notice that for the quasi-polynomial \eqref{eq:stability_eq_general}, based on Lemma~\ref{lem:Hurwitz_A}, the polynomial $f_{i,k}\left(s, e^{-\tau_2 s}\right)$ is Hurwitz stable for $\tau_2 = 0$ and $|\mu_k|^2=a^2_k+b^2_k<1$. From Lemma~\ref{lem:3_2}, suppose that $f_{i,k}\left(s, e^{-\tau_2 s}\right)$ is unstable, there must exist some $0<\tau^*<\tau_2$ such that $f\left(s, e^{-\tau^*s}\right)$ has a root on the imaginary axis. 

Let $s=\jmath \omega,~\omega\in \mb{R}$, be a root of Eqn.~\eqref{eq:stability_eq_general}, then
\begin{align}
    \jmath \omega + \lambda_i + \lambda_i (a_k+b_k\jmath)(\cos(\tau_2\omega)-\jmath\sin(\tau_2\omega)) = 0,
\end{align}
which is equivalent to
\begin{subequations}
    \begin{align}
    a_k\cos(\tau_2\omega) + b_k\sin(\tau_2\omega) &= -1, \label{eq:cos_sina1}\\
   -a_k\sin(\tau_2\omega) + b_k\cos(\tau_2\omega) &= \frac{\omega}{\lambda_i}. \label{eq:cos_sinb1}
    \end{align}
\end{subequations}
Taking the sum of square of both sides of two equations~\eqref{eq:cos_sina1} and \eqref{eq:cos_sinb1} gives $a^2_k+b^2_k = \frac{\omega^2}{\lambda_i^2} + 1$, which has no real roots as $a^2_k+b^2_k-1<0$. This implies that $\forall \tau_2\geq 0$, the equation \eqref{eq:stability_eq_general} has the same numbers of poles with negative real parts whenever $|\mu_k|=\sqrt{a_k^2+b_k^2}<1$. Therefore, the system \eqref{eq:consensus_with_delay} globally asymptotically achieves a consensus if $\max_{k=1,\ldots,d} |\mu_k| < 1$.
\end{proof}

Second, in case the cross-layer interactions are delay-free, i.e., $\tau_1>0$, $\tau_2 = 0$, we have the following theorem, which provides a sufficient consensus condition.
\begin{Theorem}\label{thm:cross-delay-free}
    Suppose that Assumptions~\ref{assumption:1} and \ref{assumption:2} hold, $|\mu_k|=|a_k+\jmath b_k| < 1$, $a_k \geq 0$,~$\forall k=1,\ldots,d$, and $\tau_2 = 0$. The system \eqref{eq:netw_matrix_form} globally asymptotically achieves a consensus if $0 \leq \tau_1< \frac{\pi}{2\lambda_{\max} (1+b_{\max})}$, where $\lambda_{\max} = \max_{i=1,\ldots,n}\lambda_i$ and $b_{\max} = \max_{k=1,\ldots,d}|b_k|$.
\end{Theorem}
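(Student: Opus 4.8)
The plan is to mirror the root-continuation argument used in the proof of Theorem~\ref{thm:no-interlayer-delay}, now with the roles of the two delays exchanged. By Lemma~\ref{lm:poly} specialized to $\tau_2 = 0$, achieving consensus is equivalent to Hurwitz stability of $\det((s + \lambda_i e^{-\tau_1 s})\m{I}_d + \lambda_i \m{A}_{\rm cross})$ for $i = 2,\ldots,n$. First I would pass to the Jordan basis $\m{A}_{\rm cross} = \m{T}\m{J}\m{T}^{-1}$, which renders this matrix upper triangular, so its determinant factors as $\prod_{k=1}^{d} g_{i,k}(s)$ with $g_{i,k}(s) = s + \lambda_i e^{-\tau_1 s} + \lambda_i \mu_k$. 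This factorization survives even when $\m{A}_{\rm cross}$ is defective, since the Jordan superdiagonal entries do not affect a triangular determinant; it therefore suffices to show that each scalar quasi-polynomial $g_{i,k}$ is Hurwitz.

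At $\tau_1 = 0$ the factor $g_{i,k}(s) = s + \lambda_i(1 + \mu_k)$ has its unique root at $-\lambda_i(1+\mu_k)$, which lies in the open left half-plane because $|\mu_k| < 1$ excludes $\mu_k = -1$ (this is exactly the content of Lemma~\ref{lem:Hurwitz_A}). By Lemma~\ref{lem:3_2}, as $\tau_1$ grows the number of right-half-plane roots can change only when a root touches the imaginary axis, so I would hunt for a purely imaginary root $s = \jmath\omega$ and determine the smallest $\tau_1$ at which one can exist. Substituting $s = \jmath\omega$ into $g_{i,k} = 0$ and separating parts yields $\cos(\tau_1\omega) = -a_k$ and $\omega = \lambda_i(\sin(\tau_1\omega) - b_k)$; the case $\omega = 0$ is ruled out since it would force $\mu_k = -1$.

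These two relations carry the whole argument. From the imaginary-part equation and the triangle inequality I obtain a uniform frequency bound $|\omega| \le \lambda_i(1 + |b_k|) \le \lambda_{\max}(1 + b_{\max})$. From the real-part equation, the hypothesis $a_k \ge 0$ forces $\cos(\tau_1\omega) = -a_k \le 0$, so the smallest admissible angle satisfies $\tau_1|\omega| \ge \pi/2$. Combining, any imaginary crossing requires $\pi/2 \le \tau_1|\omega| \le \tau_1\,\lambda_{\max}(1 + b_{\max})$, hence $\tau_1 \ge \frac{\pi}{2\lambda_{\max}(1+b_{\max})}$. Contrapositively, whenever $\tau_1 < \frac{\pi}{2\lambda_{\max}(1+b_{\max})}$ no factor $g_{i,k}$ can place a root on the imaginary axis, so by continuity the right-half-plane root count remains zero for all $i,k$ and the network achieves consensus.

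The main obstacle is not a lengthy computation but the clean extraction of the two independent bounds on $|\omega|$ and the realization that they coexist only for $\tau_1 \ge \frac{\pi}{2\lambda_{\max}(1+b_{\max})}$. In particular, the lower bound $\tau_1|\omega| \ge \pi/2$ rests entirely on the sign condition $a_k \ge 0$: a negative $a_k$ would allow $\cos(\tau_1\omega) = -a_k > 0$ to be met with $\tau_1|\omega|$ arbitrarily small, collapsing the estimate. I would therefore take care to state explicitly where $a_k \ge 0$ is used, and to note that the resulting margin is only sufficient, consistent with the conservative nature of the crossing-frequency bound.
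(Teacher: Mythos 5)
Your proposal is correct, but it is organized differently from the paper's own proof. The paper argues by direct contradiction on the same scalar factors $s+\lambda_i\mu_k+\lambda_i e^{-\tau_1 s}=0$: it writes $s=\sigma+\jmath\omega$, assumes $\sigma\ge 0$, uses the modulus identity $(\sigma+\lambda_i a_k)^2+(\omega+\lambda_i b_k)^2=\lambda_i^2 e^{-2\sigma\tau_1}$ to get $|\omega|\le\lambda_i(1+|b_k|)$, concludes $\cos(\tau_1\omega)>0$ under the stated delay bound, and then reads off $\sigma=-\lambda_i a_k-\lambda_i e^{-\sigma\tau_1}\cos(\tau_1\omega)<0$ from the real-part equation (this is where $a_k\ge 0$ enters), a contradiction --- no appeal to the root-continuation lemma is needed. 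You instead transplant the strategy of Theorem~\ref{thm:no-interlayer-delay}: start from stability at $\tau_1=0$ (Lemma~\ref{lem:Hurwitz_A}), invoke Lemma~\ref{lem:3_2} so that instability can only arise through an imaginary-axis crossing, and then show crossings are impossible because they would require simultaneously $\tau_1|\omega|\ge\pi/2$ (from $\cos(\tau_1\omega)=-a_k\le 0$) and $\tau_1|\omega|\le\tau_1\lambda_{\max}(1+b_{\max})<\pi/2$. The two proofs rest on the same pair of estimates --- the frequency bound from the imaginary part and the sign structure from the real part --- so the difference is in the scaffolding: your crossing analysis has cleaner algebra (setting $\sigma=0$ removes the $e^{-\sigma\tau_1}$ factors) and makes transparent that the threshold is the first possible crossing delay, while the paper's argument is self-contained and rules out every closed right-half-plane root in one stroke. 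Your handling of the potentially defective Jordan form and of the $\omega=0$ case are both correct and are points the paper glosses over.
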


\begin{proof}
Substituting $\tau_2 = 0$ and $\m{A}_{\rm cross}= \m{T}\m{J}\m{T}^{-1}$ into Eqn.~\eqref{eq:characteristic_equation}, and $\mu_k = a_k + \jmath b_k$, Eqn.~\eqref{eq:characteristic_equation} is equivalent to $d$ equations 
\begin{align} \label{eq:char_tau2_eq_0}
    s+\lambda_i(a_k + \jmath b_k) + \lambda_i e^{-\tau_1 s} = 0,~k=1,\ldots,d.
\end{align}
Substituting $s = \sigma + \jmath \omega$ into Eqn.~\eqref{eq:char_tau2_eq_0} yields
\begin{subequations} \label{eq:char_tau2_eq_0_a}
\begin{align}
    \sigma &= - \lambda_i a_k - \lambda_i e^{-\sigma\tau_1}\cos(\tau_1\omega), \label{eq:char_tau2_eq_0_a1}\\
    \omega &= -\lambda_i b_k - \lambda_i e^{-\sigma\tau_1}\sin(\tau_1\omega). \label{eq:char_tau2_eq_0_a2}
\end{align}
\end{subequations}
It follows from \eqref{eq:char_tau2_eq_0_a} that $(\sigma+\lambda_i a_k)^2 + (\omega+\lambda_i b_k)^2 = \lambda_i^2 e^{-2\sigma\tau_1}$. Thus, $ |\omega+\lambda_i b_k| \leq \lambda_i e^{-\sigma\tau_1}$, or
\begin{align}
    -\lambda_i (e^{-\sigma\tau_1} + b_k) \leq \omega \leq \lambda_i(e^{-\sigma\tau_1} - b_k).
\end{align}
Assume that $\sigma\ge 0$, then $e^{-\sigma\tau_1}\le 1$, we have
\begin{align}
    -\lambda_i (1 + b_k) &\leq \omega \leq \lambda_i(1 - b_k).
\end{align}
Since $|b_k| = \sqrt{|\mu_k|^2-a_k^2} < 1$, it follows that $\cos(\tau_1\omega) \geq \cos(\lambda_i(1+|b_k|)\tau_1) \geq \cos(\lambda_{\max} (1+b_{\max} )\tau_1) > \cos\left(\frac{\pi}{2}\right) = 0$. It follows from Eqn.~\eqref{eq:char_tau2_eq_0_a1}  that $\sigma < 0$, which contradicts our assumption that $\sigma \ge 0$. Thus, $\sigma < 0$, or equivalently, the system globally asymptotically achieves a consensus.
\end{proof}

Finally, we study the consensus on the multilayer network with two time delays. 
In the first step, we consider the case $\tau_1 = \tau_2 = \tau$ and determine the maximal time-delay $\tau_{\max}$ at which the system is marginally stable. Then, in the second step, we prove that the system globally asymptotically achieves a consensus for all $\tau_1, \tau_2 \geq 0$ that do not exceed a delay margin which is calibrated from $\tau_{\max}$.

\begin{Lemma} \label{lemma:same-time-delay}
    Suppose that Assumptions~\ref{assumption:1} and \ref{assumption:2} hold, $|\mu_k|<1$, $\mu_k=a_k+\jmath b_k$, $\forall k = 1,\ldots, d,$ and $\tau_1 = \tau_2 = \tau$. The system \eqref{eq:netw_matrix_form} globally asymptotically achieves a consensus if and only if $\tau < \tau_{\max} = \frac{c}{\lambda_{\min} \zeta_{\max}},$ where $\lambda_{\max} = \displaystyle\max_{i=1,\ldots,n}\lambda_i$, $\zeta_k=1+a_k+\jmath b_k$, $\zeta_{\max} = \max_{k=1,\ldots,d} |\zeta_k|$, $c = \min_{k=1,\ldots,d} \min\{|-\frac{\pi}{2}+\alpha_k|,|\frac{\pi}{2}+\alpha_k|\}$, and $\alpha_k = \arg\zeta_k$. 
\end{Lemma}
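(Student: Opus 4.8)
The plan is to reduce the matrix characteristic equation of Lemma~\ref{lm:poly} to a family of scalar first-order delay equations and then run a boundary-crossing analysis in the single parameter $\tau$. First I would set $\tau_1 = \tau_2 = \tau$ in \eqref{eq:characteristic_equation}, so that the two delayed terms merge into $\lambda_i e^{-\tau s}(\m{I}_d + \m{A}_{\rm cross})$. Since $\m{I}_d$ commutes with $\m{A}_{\rm cross}$, the Jordan decomposition $\m{A}_{\rm cross} = \m{T}\m{J}\m{T}^{-1}$ triangularizes the bracket and the determinant factors as $\prod_{k=1}^{d}(s + \lambda_i\zeta_k e^{-\tau s})$, where $\zeta_k = 1 + \mu_k = 1 + a_k + \jmath b_k$ are the eigenvalues of $\m{I}_d + \m{A}_{\rm cross}$. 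By Lemma~\ref{lm:poly}, consensus is then equivalent to Hurwitz stability of every scalar quasi-polynomial $h_{i,k}(s) = s + \lambda_i\zeta_k e^{-\tau s}$ for $i = 2,\ldots,n$ and $k = 1,\ldots,d$.

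Next I would anchor the baseline and locate the imaginary-axis crossings. At $\tau = 0$ the root of $h_{i,k}$ is $s = -\lambda_i\zeta_k$, with real part $-\lambda_i(1+a_k) < 0$ because $|\mu_k| < 1$ forces $1 + a_k > 0$; hence every mode is stable at $\tau = 0$. Invoking Lemma~\ref{lem:3_2}, the count of right-half-plane roots can change only when a root touches the imaginary axis, so I would set $s = \jmath\omega$ and split $h_{i,k}(\jmath\omega) = 0$ into modulus and phase. The modulus gives $|\omega| = \lambda_i|\zeta_k|$, and writing $\zeta_k = |\zeta_k|e^{\jmath\alpha_k}$ with $\alpha_k = \arg\zeta_k \in (-\tfrac{\pi}{2}, \tfrac{\pi}{2})$ (again from $1 + a_k > 0$), the phase condition pins the critical delays to $(\pm\tfrac{\pi}{2} + \alpha_k + 2\pi m)/(\lambda_i|\zeta_k|)$; the smallest positive one for mode $(i,k)$ is $g_k/(\lambda_i|\zeta_k|)$ with $g_k = \min\{|\tfrac{\pi}{2}+\alpha_k|, |-\tfrac{\pi}{2}+\alpha_k|\}$.

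The delay margin is the minimum of these mode-wise critical values over all $i$ and $k$: minimizing over $i$ places the extreme Laplacian eigenvalue $\lambda_{\max}$ in the denominator, and minimizing over $k$ extracts $c$ and $\zeta_{\max}$, producing the stated $\tau_{\max}$. To turn this into an ``if and only if'' I would establish transversality: differentiating $h_{i,k}(s) = 0$ and using $\lambda_i\zeta_k e^{-\tau s} = -s$ yields $\mathrm{d}s/\mathrm{d}\tau = -s^2/(1 + \tau s)$, whose real part at $s = \jmath\omega$ equals $\omega^2/(1 + \tau^2\omega^2) > 0$. Every crossing is therefore strictly left-to-right, so stability is never regained: all modes stay Hurwitz exactly while $\tau < \tau_{\max}$, and at least one root enters the right half-plane once $\tau \ge \tau_{\max}$. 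The step I expect to be the main obstacle is the bookkeeping that collapses $\min_{i,k} g_k/(\lambda_i|\zeta_k|)$ into the single quotient $c/(\lambda_{\max}\zeta_{\max})$ — in particular, verifying that the indices attaining $\min_k g_k$ and $\max_k|\zeta_k|$ are compatible so that the quotient of extremes is the genuine first crossing rather than a conservative under-estimate; the transversality computation is precisely what certifies necessity once this critical delay has been correctly identified.
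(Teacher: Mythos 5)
Your plan follows a genuinely different route from the paper, and it is the more rigorous one. The paper performs the same reduction to the scalar quasi-polynomials $s+\lambda_i\zeta_k e^{-\tau s}=0$, but then argues directly on the real part $\sigma$ of a root $s=\sigma+\jmath\omega$: sufficiency by assuming $\sigma\ge 0$ and extracting a contradiction from $\cos(\tau\omega-\alpha_k)>0$, and ``necessity'' by a loose chain of implications (``\ldots which is satisfied if \ldots'') that never identifies the actual crossing delays and never proves that instability occurs beyond them. You instead run the classical D-subdivision argument: exact crossing frequencies $|\omega|=\lambda_i|\zeta_k|$, critical delays $(\pm\tfrac{\pi}{2}+\alpha_k+2\pi m)/(\lambda_i|\zeta_k|)$, and the transversality computation $\mathrm{Re}\left(\mathrm{d}s/\mathrm{d}\tau\right)\big|_{s=\jmath\omega}=\omega^2/(1+\tau^2\omega^2)>0$, which has no analogue in the paper and is exactly what certifies that right-half-plane roots, once created, never disappear. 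Your computations check out, including the observation that $|\mu_k|<1$ forces $1+a_k>0$ and hence $\alpha_k\in(-\tfrac{\pi}{2},\tfrac{\pi}{2})$ (the paper imprecisely asserts $\alpha_k\in[0,\tfrac{\pi}{2})$); also note the lemma's $\lambda_{\min}$ in the displayed formula for $\tau_{\max}$ is a typo for the $\lambda_{\max}$ it defines, which you correctly used.

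The obstacle you flagged at the end is real, and it is a defect of the lemma and of the paper's own proof, not of your argument. Your analysis yields the exact threshold $\tau^*=\min_{k}\, c_k/(\lambda_{\max}|\zeta_k|)$ with $c_k=\tfrac{\pi}{2}-|\alpha_k|$, whereas the lemma's $\tau_{\max}=c/(\lambda_{\max}\zeta_{\max})$ replaces $\min_k\left(c_k/|\zeta_k|\right)$ by the quotient of extremes $(\min_k c_k)/(\max_k|\zeta_k|)$; this is only a lower bound, with equality precisely when one index $k$ attains both extremes. For $d=2$ the spectrum of $\m{A}_{\rm cross}$ is $\pm\sqrt{a^{12}a^{21}}$, the extremes are attained at the same $k$, and the formula is exact --- which is presumably why the discrepancy went unnoticed --- but for $d\ge 3$ it fails: take eigenvalues $\mu_1=0.9$ and $\mu_{2,3}=-0.45\pm 0.7\jmath$ (realizable by a real zero-diagonal matrix, all moduli below $1$). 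Then $c_1/|\zeta_1|=(\pi/2)/1.9\approx 0.83$ and $c_2/|\zeta_2|\approx 0.666/0.890\approx 0.75$, so $\tau^*\approx 0.75/\lambda_{\max}$, while $\tau_{\max}\approx 0.666/(1.9\lambda_{\max})\approx 0.35/\lambda_{\max}$. For $\tau\in(\tau_{\max},\tau^*)$ every mode is still Hurwitz, consensus holds, and the stated ``only if'' is violated. So with the stated $\tau_{\max}$ only sufficiency survives; your proof, carried through with the per-mode threshold $\tau^*$, gives the correct if-and-only-if statement. The paper's necessity argument performs this same collapse over $i,k$ silently (``since the condition should be held for all $i,k$''), which is exactly the unjustified step you anticipated.
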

\begin{figure}
    \centering
    \includegraphics[width = 0.65\linewidth]{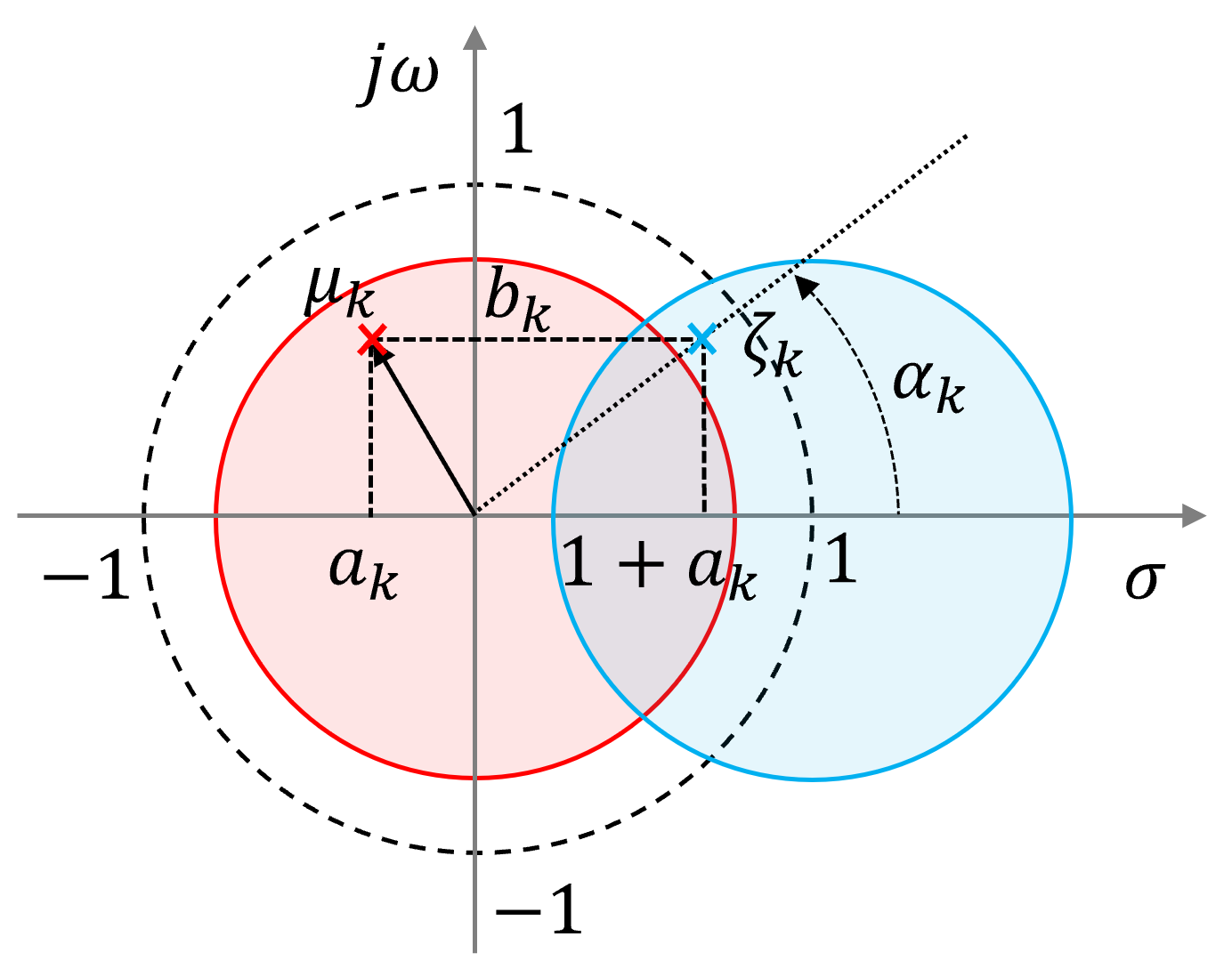}
    \caption{The eigenvalues of $\m{A}$ and $\m{A}_{\rm cross}$ are located in the blue circle and the red circle, respectively.}
    \label{fig:root_location}
\end{figure}
\begin{proof}
The equation $\text{det}((s + \lambda_i e^{-\tau_1 s})\m{I}_d + \lambda_i \m{A}_{\rm cross}e^{-\tau_2 s}) = 0$ is equivalent to $d$ equations 
\begin{subequations}
    \begin{align} \label{eq:stability_eq_general_1}
    s+ \lambda_i e^{-\tau_1 s}+ \lambda_i \mu_k e^{-\tau_2 s} &=0, \\
    s + \lambda_i (1+\mu_k) e^{-\tau s} &= 0, \label{eq:common_time_delay}
\end{align}
\end{subequations}
$\forall i=2,\ldots,n,~k=1,\ldots,n$, where we have substituted $\tau_1 = \tau_2 = \tau$ into Eqn.~\eqref{eq:common_time_delay}. Let $s = \sigma + \jmath \omega$ and $\mu_k = a_k + \jmath b_k$, the roots of Eqn.~\eqref{eq:common_time_delay} satisfy
\begin{subequations}
\begin{align}
\sigma &= -r_{ik} e^{-\tau\sigma} \cos(\tau\omega - \alpha_k), \\
\omega &= r_{ik} e^{-\tau\sigma} \sin(\tau\omega - \alpha_k),
\end{align}
\end{subequations}
where $r_{ik} = \lambda_i \sqrt{(1+a_k)^2 + b_k^2} = \lambda_i |\zeta_k|$, $\cos\alpha_k = \frac{1+a_k}{\sqrt{(1+a_k)^2 + b_k^2}}$, and $\sin\alpha_k = \frac{b_k}{\sqrt{(1+a_k)^2 + b_k^2}}$. As depicted in Fig.~\ref{fig:root_location}, we have $\alpha_k \in [0, \frac{\pi}{2})$. The system globally asymptotically achieves a consensus if and only if $\sigma<0,~\forall \omega$.

(Necessity) Suppose that $\sigma<0,~\forall \omega$, it follows that $\cos(\tau\omega-\alpha_k) = \cos(\tau r_{ik} e^{-\tau\sigma} \sin(\tau\omega - \alpha_k)-\alpha_k)>0,~\forall \omega$. This implies that
\[-\frac{\pi}{2}+\alpha_k<\tau r_{ik} e^{-\tau\sigma} \sin(\tau\omega - \alpha_k) < \frac{\pi}{2}+\alpha_k,~\forall \omega,\]
which is satisfied if
\begin{align*}
\tau r_{ik} e^{-\tau\sigma} <\min\left\{\left|-\frac{\pi}{2}+\alpha_k\right|, \left|\frac{\pi}{2}+\alpha_k\right|\right\}:=c_k.
\end{align*}
It follows that $\tau < \frac{c_k}{r_{ik}}e^{\tau\sigma}<\frac{c_k}{r_{ik}}.$ Since the condition should be held for all $i, k$, the consensus condition is given as
\begin{align} \label{eq:delay_consensus_condition}
\tau < \frac{c}{\lambda_{\max} \zeta_{\max}}.
\end{align}

(Sufficiency) Suppose that $\tau < \frac{c}{\lambda_{\max} \zeta_{\max}}$, because $\sigma^2 + \omega^2 = \lambda_i^2(1+\mu_k)^2 e^{-2\tau\sigma}$, it follows that $|\omega| \leq r_{ki} e^{-\tau\sigma}$. If $\sigma\geq 0$, then $e^{-\tau\sigma} \leq 1$ and $\tau|\omega| < c $. Thus, 
\[-c_k -\alpha_k\le -c -\alpha_k < \tau\omega - \alpha_k < c - \alpha_k \leq c_k - \alpha_k.\]
Note that 
\begin{align*}
    c_k = \begin{cases}
        \frac{\pi}{2} - \alpha_k, & \text{if } \alpha_k \geq 0,\\
        \frac{\pi}{2} + \alpha_k, & \text{if } \alpha_k < 0.
    \end{cases}
\end{align*}
$c_k - \alpha_k$ and $-c_k-\alpha_k$ both belong to $\left(-\frac{\pi}{2},\frac{\pi}{2} \right)$. It follows that $\cos(\tau\omega) > \min\{ \cos(c+\alpha_k),\cos(c+\alpha_k)\} \geq 0$, and $\sigma < 0$, which is a contradiction. Therefore,  $\sigma<0$.
\end{proof}

\begin{Theorem} \label{thm:general-system}
    Suppose that all assumptions of Lemma~\ref{lemma:same-time-delay} hold. Then, the system \eqref{eq:netw_matrix_form} globally asymptotically achieves a consensus if one of the following conditions is satisfied
    \begin{itemize}
        \item[(i)]  $0\leq \tau_1 \leq \tau_2 < \frac{\tau_{\max}}{\sqrt{2}}$, where $\tau_{\max}$ is given as in Lemma~\ref{lemma:same-time-delay}.
        \item[(ii)] $0 \leq \tau_1 \leq \tau_2 < \tau_{\max}'= \frac{c}{\lambda_{\max} \zeta_{\max}'},$ where   $\zeta_{\max}' = \max_{k=1,\ldots,d} (1+|a_k| +|b_k|)$, $\lambda_{\max}$ and $c$ are defined as in Lemma~\ref{lemma:same-time-delay}.
    \end{itemize}
\end{Theorem}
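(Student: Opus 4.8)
The plan is to reduce the problem, via Lemma~\ref{lm:poly}, to showing that every scalar quasi-polynomial $f_{i,k}(s)=s+\lambda_i e^{-\tau_1 s}+\lambda_i\mu_k e^{-\tau_2 s}$, $i=2,\ldots,n$, $k=1,\ldots,d$, has all its roots in the open left half-plane. Following the direct style of Lemma~\ref{lemma:same-time-delay}, I would suppose a root $s=\sigma+\jmath\omega$ with $\sigma\ge0$ exists and aim to derive $\sigma<0$, a contradiction; since the argument applies to every root, this places the whole spectrum in the open LHP. (Alternatively, since $-\m{A}$ is Hurwitz at $\tau_1=\tau_2=0$ by Lemma~\ref{lem:Hurwitz_A} together with the delay-free case, Lemma~\ref{lem:3_2} lets me restrict attention to imaginary-axis crossings $\sigma=0$.) Writing $\theta_1=\tau_1\omega$, $\theta_2=\tau_2\omega$ and separating $f_{i,k}(s)=0$ into real and imaginary parts gives
\begin{align*}
\sigma &= -\lambda_i e^{-\tau_1\sigma}\cos\theta_1 - \lambda_i e^{-\tau_2\sigma}\left(a_k\cos\theta_2 + b_k\sin\theta_2\right),\\
\omega &= \lambda_i e^{-\tau_1\sigma}\sin\theta_1 - \lambda_i e^{-\tau_2\sigma}\left(b_k\cos\theta_2 - a_k\sin\theta_2\right).
\end{align*}

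First I would obtain an a priori bound on $|\omega|$ so that the phases stay in a controlled range. Using $\sigma\ge0$ (so $e^{-\tau_1\sigma},e^{-\tau_2\sigma}\le1$), the two parts of the theorem come from two ways of estimating the right-hand sides. For (ii) the triangle inequality applied to the $\omega$-equation yields $|\omega|\le\lambda_i(1+|a_k|+|b_k|)\le\lambda_{\max}\zeta_{\max}'$; for (i) the sum-of-squares estimate $|e^{-\tau_1 s}+\mu_k e^{-\tau_2 s}|^2\le 2\big(e^{-2\tau_1\sigma}+|\mu_k|^2e^{-2\tau_2\sigma}\big)$ applied to $|s|^2=\lambda_i^2|e^{-\tau_1 s}+\mu_k e^{-\tau_2 s}|^2$ produces the $\sqrt2$-type bound matching $\tau_{\max}/\sqrt2$. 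In either case, combining the bound with the hypothesis on $\tau_2$ and with $\tau_1\le\tau_2$ gives $\tau_1|\omega|\le\tau_2|\omega|<c$, i.e. $|\theta_1|\le|\theta_2|<c\le c_k=\tfrac{\pi}{2}-|\alpha_k|$.

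The crux is then the real-part equation. Because the two delays differ, I cannot collapse the two exponential terms into the single factor $\lambda_i\zeta_k e^{-\tau s}$ that made Lemma~\ref{lemma:same-time-delay} transparent; the remedy is the decomposition
\[
\sigma = -\lambda_i\Big[\,e^{-\tau_2\sigma}|\zeta_k|\cos(\theta_2-\alpha_k) + \big(e^{-\tau_1\sigma}\cos\theta_1 - e^{-\tau_2\sigma}\cos\theta_2\big)\Big],
\]
obtained from $(1+a_k)\cos\theta_2+b_k\sin\theta_2=|\zeta_k|\cos(\theta_2-\alpha_k)$. The first term is strictly positive because $|\theta_2|<c_k$ forces $\theta_2-\alpha_k\in(-\tfrac{\pi}{2},\tfrac{\pi}{2})$, exactly as in Lemma~\ref{lemma:same-time-delay}. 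The correction term is nonnegative: $\tau_1\le\tau_2$ and $\sigma\ge0$ give $e^{-\tau_1\sigma}\ge e^{-\tau_2\sigma}>0$, while $|\theta_1|\le|\theta_2|<\tfrac{\pi}{2}$ and the monotonicity of $\cos$ on $[0,\tfrac{\pi}{2}]$ give $\cos\theta_1\ge\cos\theta_2\ge0$. Hence the bracket is positive and $\sigma<0$, contradicting $\sigma\ge0$; this closes both cases.

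The hard part is precisely this interaction of the two delays in the real-part estimate: the monotonicity that annihilates the correction term relies essentially on the ordering $\tau_1\le\tau_2$ together with $\sigma\ge0$, which is why the theorem assumes $\tau_1\le\tau_2$ and why the cross-layer delay $\tau_2$ (attached to $\mu_k$, hence to $\alpha_k$) is the one that must respect the phase budget $c_k$. I expect the bookkeeping of the $|\omega|$-bound---reconciling the $\ell^1$ estimate behind $\zeta_{\max}'$ with the $\ell^2$ estimate behind the $\sqrt2$ factor---to be the only routine-but-delicate computation; once $|\theta_2|<c$ is secured, the decomposition settles both (i) and (ii) at once, and agreement with Lemma~\ref{lemma:same-time-delay} in the limit $\tau_1=\tau_2$ serves as a consistency check.
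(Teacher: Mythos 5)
Your proposal follows essentially the same route as the paper's proof: reduce via Lemma~\ref{lm:poly} to the scalar quasi-polynomials $s+\lambda_i e^{-\tau_1 s}+\lambda_i\mu_k e^{-\tau_2 s}$, suppose a root with $\sigma\ge 0$, bound $|\omega|$ by the triangle inequality for (ii) and a sum-of-squares estimate for (i) so that $\tau_1|\omega|\le\tau_2|\omega|<c$, and then use $e^{-\tau_1\sigma}\ge e^{-\tau_2\sigma}$ together with $\cos\theta_1\ge\cos\theta_2>0$ to drive the real part negative --- your explicit decomposition into $|\zeta_k|\cos(\theta_2-\alpha_k)$ plus a nonnegative correction term is exactly the paper's inequality chain written out. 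The only deviation is that your $\ell^2$ estimate in case (i) yields $\sqrt{2}\lambda_i\sqrt{1+|\mu_k|^2}$ where the paper writes $\sqrt{2}\lambda_i|\zeta_k|$; these close the argument under the same circumstances (both implicitly need $a_k\ge 0$, since for $a_k<0$ one has $\sqrt{1+|\mu_k|^2}>|\zeta_k|$ and the paper's own intermediate bound also fails), so your proof is on par with the paper's.
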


\begin{proof}
    We first consider the case $0\le \tau_1 \le \tau_2 < \tau_{\max}$. From the Eqn.~\eqref{eq:stability_eq_general_1}, it follows that
    \begin{subequations}
        \begin{align}
            \sigma &= - \lambda_i e^{-\sigma\tau_1}\cos(\omega\tau_1) \nonumber \\
            &\qquad\quad - \lambda_i e^{-\sigma\tau_2} (a_k \cos(\omega\tau_2) +b_k\sin(\omega\tau_2)),\label{eq:sigmaThm3}\\
            \omega &= \lambda_i e^{-\sigma\tau_1}\sin(\omega\tau_1) \nonumber \\
            &\qquad \quad + \lambda_i e^{-\sigma\tau_2} (a_k\sin(\omega\tau_2) - b_k \cos(\omega\tau_2)).
            \label{eq:omegaThm3}
        \end{align}
    \end{subequations}
    Suppose that $\sigma \geq 0$, we have $0<e^{-\sigma\tau_2} \leq e^{-\sigma\tau_1}\leq 1$. 
    \begin{itemize}
    \item If the conditions in the statement (i) are satisfied, it follows from Eqn.~\eqref{eq:omegaThm3} that 
    \begin{align*}
        \omega^2 &\leq \lambda_i^2 \Big((e^{-\sigma\tau_1} \sin(\omega\tau_1) + a_ke^{-\sigma\tau_2}\sin(\omega\tau_2))^2 \\ &\qquad + e^{-2\sigma\tau_2}b_k^2 \cos^2(\omega\tau_2)\Big) (1^2 + 1^2 )\\
        &\leq 2\lambda_i^2 ((1+a_k)^2 + b_k^2)e^{-2\sigma\tau_1}
    \end{align*}
    It follows that $|\omega| \leq \sqrt{2}\lambda_i |\zeta_k|$, and thus,
    \begin{align*}
        0\leq \tau_1|\omega| \leq \tau_2|\omega| < \sqrt{2}\tau_{\max}\lambda_i |\zeta_k| \leq c < \frac{\pi}{2}.
    \end{align*}
    \item If the conditions in statement (ii) are satsified, it follows from Eqn.~\eqref{eq:omegaThm3} that
    $|\omega| \leq \lambda_i (1+|a_k| + |b_k|)$, and
    \begin{align*} 
        0\leq \tau_1|\omega| \leq \tau_2|\omega| < \tau_{\max}'\lambda_i (1+|a_k| + |b_k|) \leq c < \frac{\pi}{2}.
    \end{align*}
    \end{itemize}
   As a result, in both (i) and (ii), we have
    \begin{align}
        \cos(\tau_1\omega) \geq \cos(\tau_2\omega) > \cos(c) > \cos\left(\frac{\pi}{2}\right) = 0.
    \end{align}
    Combining with \eqref{eq:sigmaThm3}, we have 
    \begin{align*}
        \sigma &< -\lambda_i e^{-\sigma\tau_2} ((1+a_k)\cos(\tau_2\omega) + b_k \cos(\tau_2\omega)) \\
        &= -r_{ik} e^{-\sigma\tau_2} \cos(\tau_2\omega - \alpha_k).
    \end{align*}
    Since $-c-\alpha_k < \tau_2 \omega - \alpha_i < c-\alpha_k$, it follows that $\cos(\tau_2\omega - \alpha_k)>0$. This implies $\sigma <0$, and we have a contradiction. 
    
    Therefore, if $0\leq \tau_1 \leq \tau_2 <\tau_{\max}$, we have $\sigma < 0$, i.e., the system  asymptotically achieves a consensus.
\end{proof}

\begin{Remark}
    In \cite{Ruan2003OnTZ}, a two-time-delay system 
    \begin{align} \label{eq:existing_result}
        \dot{x}(t) = -ax(t) - b(x(t-\tau_1) + x(t-\tau_2)),
    \end{align}
    where $a>0, 0<\tau_1<\tau_2$, has been studied. The authors gave a necessary and sufficient condition for stability of \eqref{eq:existing_result} by specifying a bound of the parameter $b$, which depends on the solution of a transcendental equation. In this paper, the characteristic equation~\eqref{eq:stability_eq_general} has a similar form to \eqref{eq:existing_result}. However, the coefficients associated with the delay terms are not identical and the approach in \cite{Ruan2003OnTZ} is inapplicable. Although Theorem~\ref{thm:general-system} provides only a sufficient condition for asymptotic convergence, our analysis is simpler and relies on only simple computations.
\end{Remark}
\begin{figure*}[t]
    \centering
\subfloat[$\tau_1 = 0, \tau_2=2$]{\includegraphics[width=0.32\linewidth]{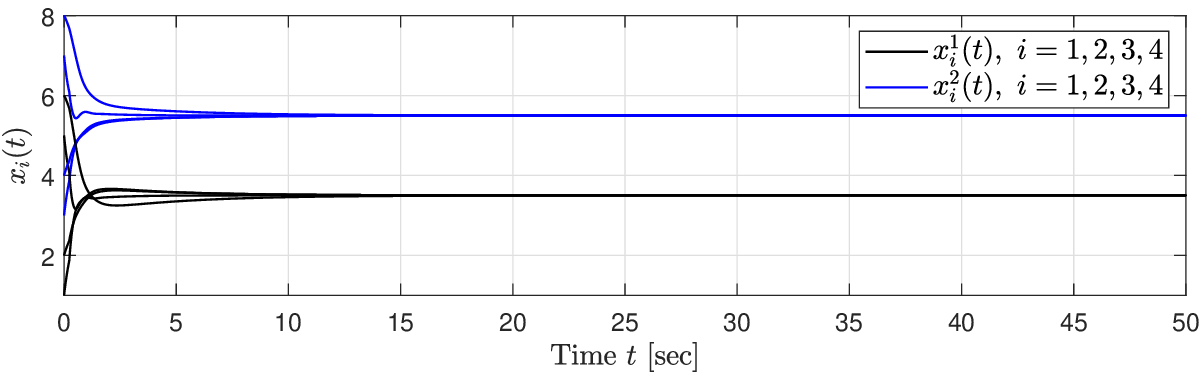}}\hfill
\subfloat[$\tau_1 = 0, \tau_2=5$]{\includegraphics[width=0.33\linewidth]{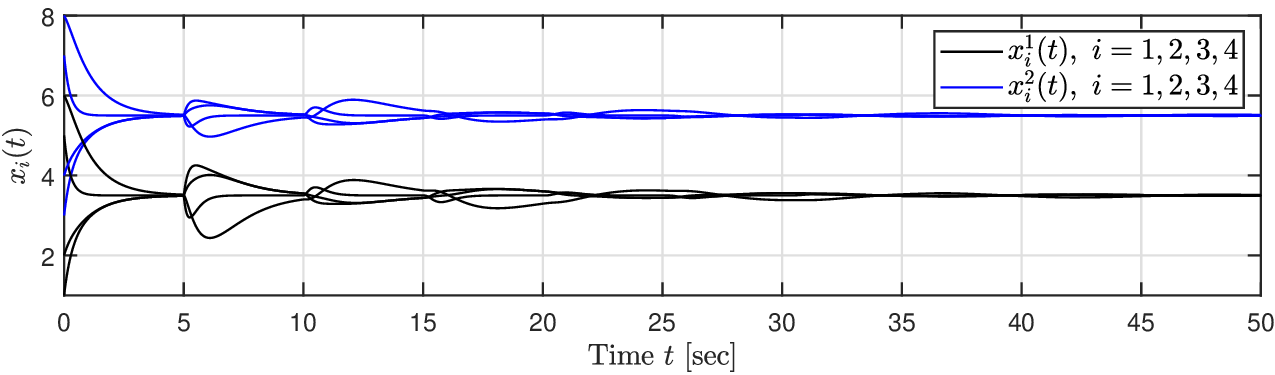}}\hfill
\subfloat[$\tau_1 = 0, \tau_2=10$]{\includegraphics[width=0.32\linewidth]{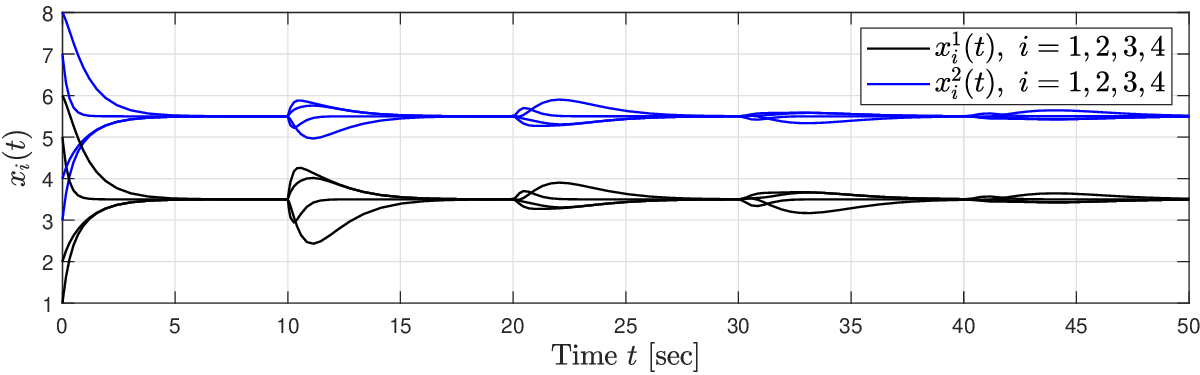}}
\caption{Simulations of delayed two-layer consensus network of four agents \eqref{eq:consensus_with_delay} with $\tau_1=0$ and the matrix $\m{A}$ has $|\mu_k|<1$.}
\label{simul:twolayer_intralayer_delayfree}
\end{figure*}
\section{Two-layer networks and simulation results}
\label{sec:twolayer_network}
In this section, we focus on two-layer matrix-weighted networks with time delays and present the corresponding simulation results. The associated matrix that captures the agent-to-agent interaction pattern is given by
$\m{A} = \begin{bmatrix}
    1 & a^{12}\\
    a^{21} & 1
\end{bmatrix}$. Thus, 
$\m{A}_{\rm cross}=-\m{I}_2+\m{A}=\begin{bmatrix}
     0 & a^{12}\\
     a^{21} & 0 \end{bmatrix}$ has a pair of pure imaginary (real) eigenvalues when $a^{12}a^{21}<0$ (resp., $a^{12}a^{21}>0$), and the following result can be stated. 
\begin{figure*}[t]
\centering
\subfloat[$\tau_1 = 0, \tau_2=1$]{\includegraphics[width=0.32\linewidth]{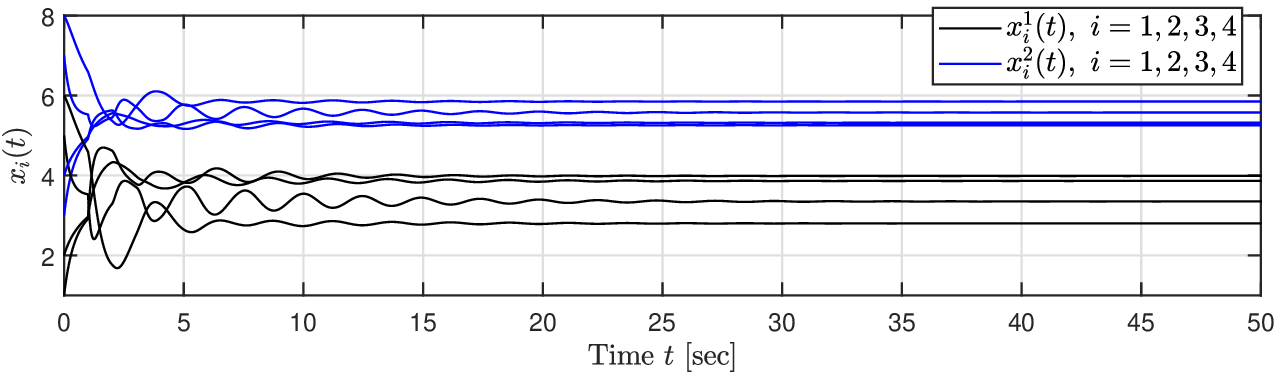}}\hfill
\subfloat[$\tau_1 = 0, \tau_2=2$]{\includegraphics[width=0.32\linewidth]{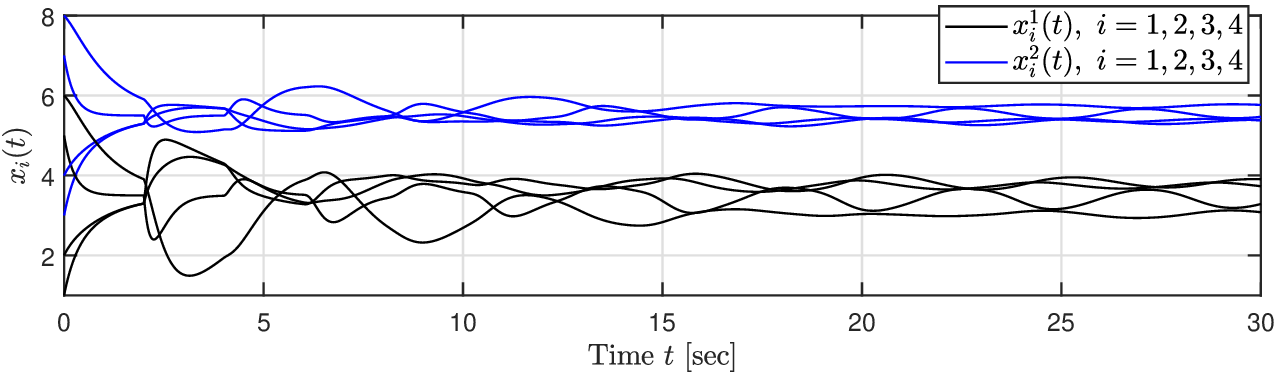}}\hfill
\subfloat[$\tau_1 = 0, \tau_2=10$]{\includegraphics[width=0.32\linewidth]{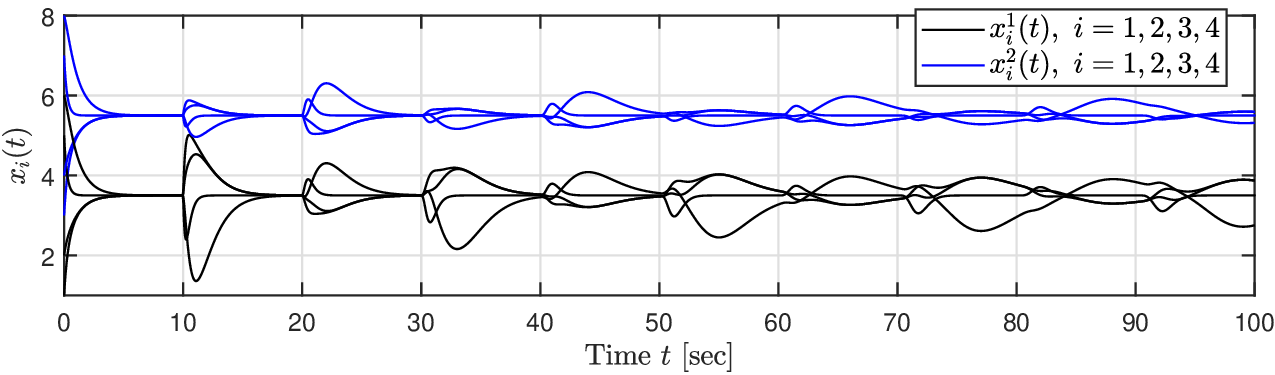}}
\caption{Simulations of delayed two-layer consensus network of four agents \eqref{eq:consensus_with_delay} with $\tau_1=0$ and the matrix $\m{A}$ has an eigenvalue $|\mu_k| = 1$.}
\label{simul:twolayer_intralayer_delayfree_A2}
\end{figure*}

\begin{figure*}[t]
    \centering
\subfloat[$\tau_1 = 0, \tau_2=2$]{\includegraphics[width=0.32\linewidth]{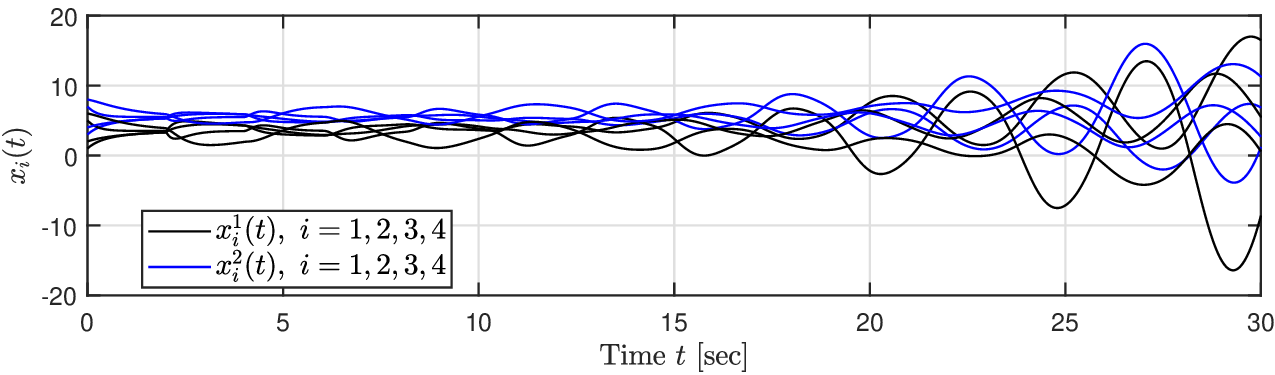}}\hfill
\subfloat[$\tau_1 = 0, \tau_2=5$]{\includegraphics[width=0.32\linewidth]{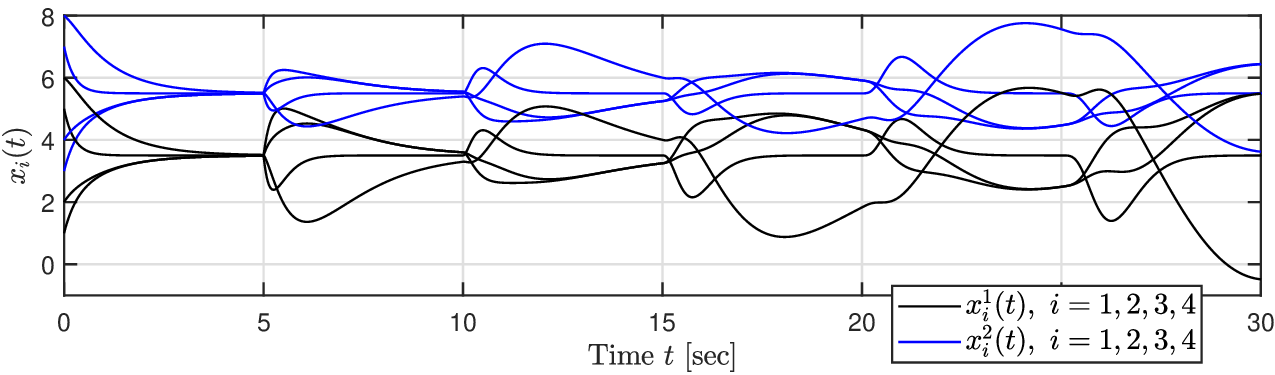}}\hfill
\subfloat[$\tau_1 = 0, \tau_2=10$]{\includegraphics[width=0.32\linewidth]{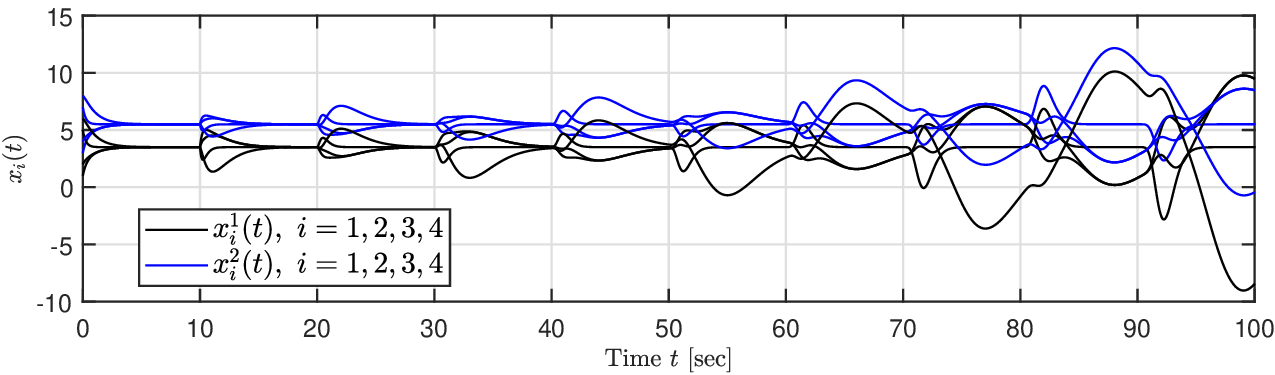}}
\caption{Simulations of delayed two-layer consensus network of four agents \eqref{eq:consensus_with_delay} with $\tau_1=0$ and the matrix $\m{A}$ has eigenvalues $|\mu_k| > 1$.}
\label{simul:twolayer_intralayer_delayfree_A3}
\end{figure*}

\begin{figure*}[h!]
\centering
\subfloat[$\tau_1 = 0.23, \tau_2=0.23$]{\includegraphics[width=.32\linewidth]{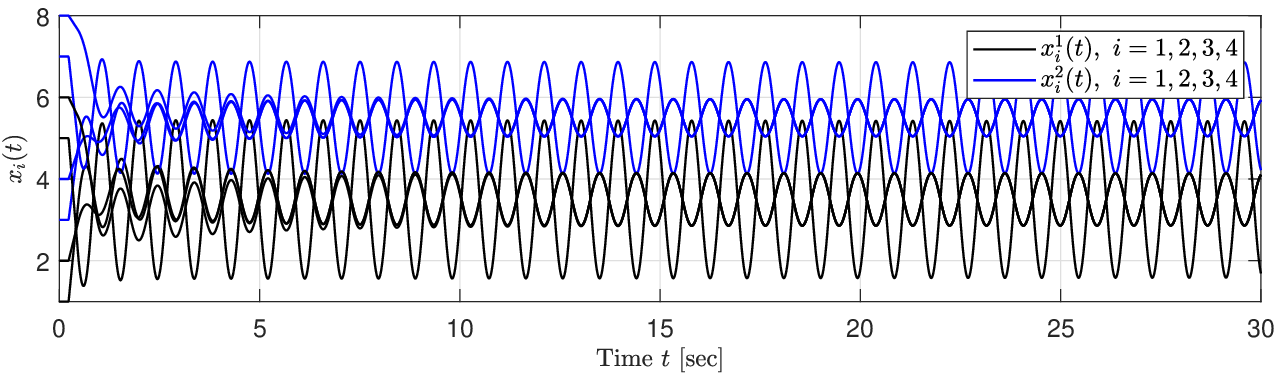}}\hfill
\subfloat[$\tau_1 = 0.2, \tau_2=0.2$]{\includegraphics[width=.32\linewidth]{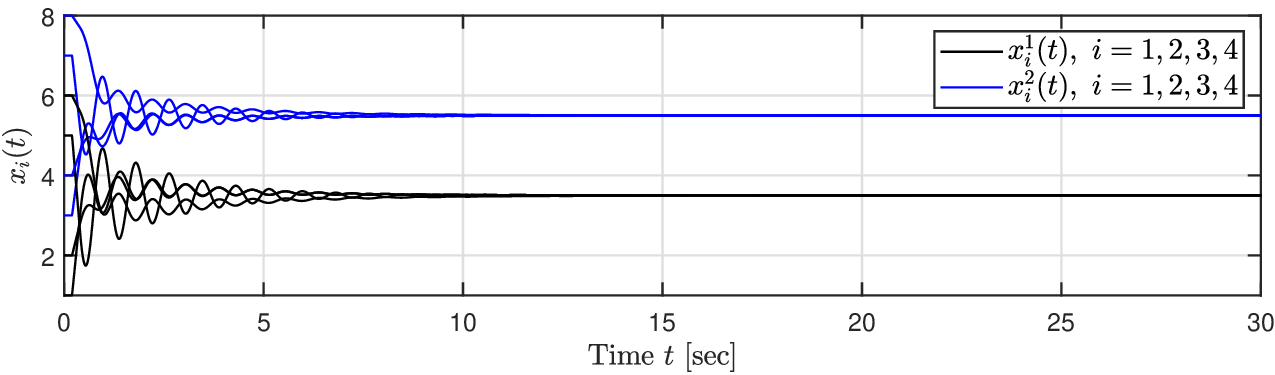}} \hfill
\subfloat[$\tau_1 = 0.2, \tau_2=0.23$]{\includegraphics[width=.31\linewidth]{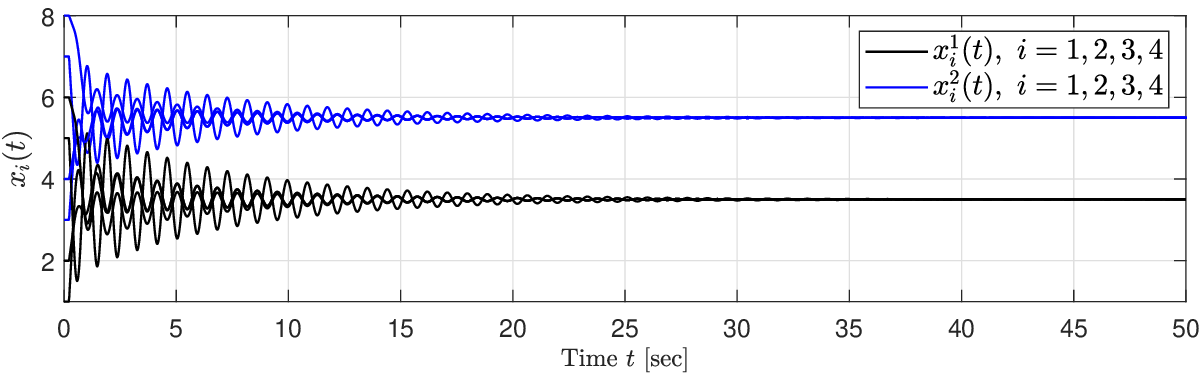}}\\
\subfloat[$\tau_1 = 0.23, \tau_2=2$]{\includegraphics[width=.32\linewidth]{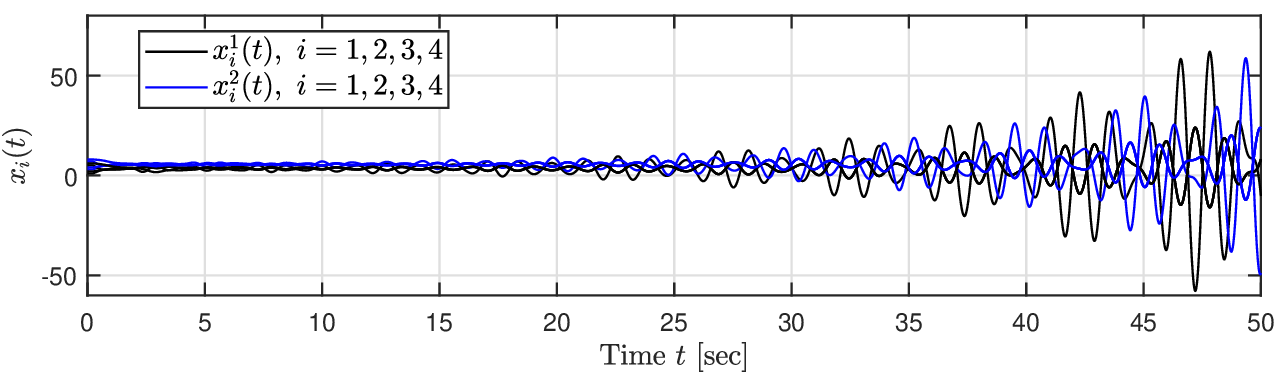}}\hfill
\subfloat[$\tau_1 = 0.23, \tau_2=10$]{\includegraphics[width=.31\linewidth]{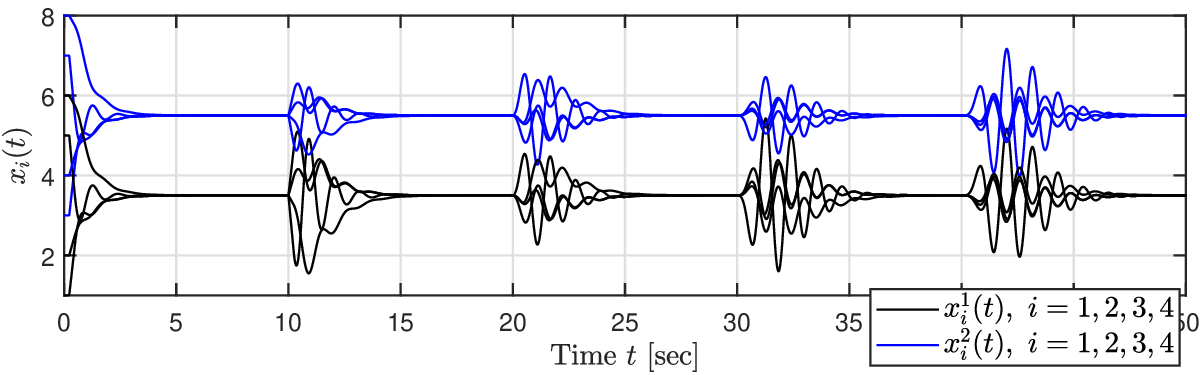}}\hfill
\subfloat[$\tau_1 = 0.5, \tau_2=0.23$]{\includegraphics[width=.32\linewidth]{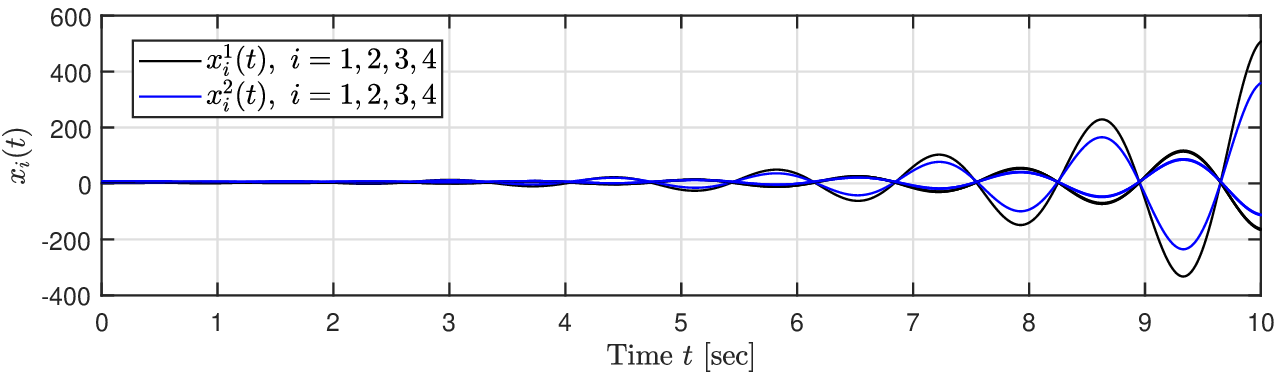}}
\caption{Simulations of delayed 4-agent 2-layer network \eqref{eq:consensus_with_delay} with $\tau_1,\tau_2 \ge 0$.}
\label{simul:twolayer_twodelays}
\end{figure*}
\begin{Corollary}\label{cor:2layer_1}
Suppose that Assumptions~\ref{assumption:1} and \ref{assumption:2} hold, and $|a^{12}a^{21}| < 1$. Then, the two-layer consensus system globally asymptotically achieves a consensus if
\begin{itemize}
\item[(i)] $\tau_1=0$ (and this is also a necessary condition); or
\item[(ii)] $\tau_2=0$, and $0\leq \tau_1< \frac{\pi}{2\lambda_{\max}(1+\sqrt{|a^{12}a^{21}}|)} := \tau_{\max}$; or
\item [(iii)] $-1 < a^{12}a^{21}<0$, $0\leq \tau_1 \leq \tau_2 < \tau_{\max}$. 
\end{itemize}
\end{Corollary}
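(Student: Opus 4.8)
The plan is to specialize the general results of Section~\ref{sec:consensus_condition} to the two-layer interaction matrix. First I would record the spectral data of $\m{A}_{\rm cross}=\begin{bmatrix} 0 & a^{12}\\ a^{21} & 0\end{bmatrix}$: its characteristic polynomial is $\mu^2-a^{12}a^{21}$, so the eigenvalues are $\mu_{1,2}=\pm\sqrt{a^{12}a^{21}}$ with $|\mu_k|=\sqrt{|a^{12}a^{21}|}<1$ by hypothesis. When $a^{12}a^{21}<0$ they form a purely imaginary conjugate pair ($a_k=0$, $b_k=\pm\sqrt{|a^{12}a^{21}|}$, hence $b_{\max}=\sqrt{|a^{12}a^{21}|}$); when $a^{12}a^{21}>0$ they are real ($b_k=0$, $a_k=\pm\sqrt{a^{12}a^{21}}$). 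Each item is then an instance of a theorem already proved, so the work is mostly verifying that the relevant hypotheses specialize correctly and that the scalar margins collapse to the stated closed forms.

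For item (i), sufficiency is immediate: $\tau_1=0$ together with $|\mu_k|<1$ are exactly the hypotheses of Theorem~\ref{thm:no-interlayer-delay}, which yields global asymptotic consensus for every $\tau_2\ge0$. For the necessity I would argue by exhibiting a destabilizing $\tau_2$ whenever $\tau_1>0$. Concretely, I would take one scalar factor $s+\lambda_i e^{-\tau_1 s}+\lambda_i\mu_k e^{-\tau_2 s}=0$ from Lemma~\ref{lm:poly}, substitute $s=\jmath\omega$, and split it into the magnitude relation $|\jmath\omega+\lambda_i e^{-\jmath\tau_1\omega}|=\lambda_i|\mu_k|$ and a phase relation that fixes $\tau_2$. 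Because the magnitude relation involves only $\tau_1$ and $\omega$, I would solve it for a real $\omega$, read off the corresponding $\tau_2\ge0$ from the phase, and invoke Lemma~\ref{lem:3_2} to conclude that a root genuinely crosses into the open right half-plane as $\tau_2$ passes the critical value. Establishing that such a crossing is unavoidable for $\tau_1>0$ (solvability of the magnitude relation and outward crossing direction) is the delicate part of the whole corollary.

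Items (ii) and (iii) are pure specializations. For (ii), with $\tau_2=0$ I would apply Theorem~\ref{thm:cross-delay-free}: in the purely imaginary regime $a_k=0\ge0$ and $b_{\max}=\sqrt{|a^{12}a^{21}|}$, so the generic bound $\frac{\pi}{2\lambda_{\max}(1+b_{\max})}$ becomes precisely $\tau_{\max}$; the real regime contributes the eigenvalue $-\sqrt{a^{12}a^{21}}$ with $a_k<0$, which falls outside the hypotheses of Theorem~\ref{thm:cross-delay-free} and which I would treat by a direct sign analysis of $s-\lambda_i\sqrt{a^{12}a^{21}}+\lambda_i e^{-\tau_1 s}=0$, checking that its first imaginary-axis crossing occurs no earlier than $\tau_{\max}$. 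For (iii), the sign condition $-1<a^{12}a^{21}<0$ places us back in the purely imaginary case with $a_k=0$, and I would invoke Theorem~\ref{thm:general-system} under the ordering $0\le\tau_1\le\tau_2$ to obtain consensus on the stated delay range.

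The step I expect to be the genuine obstacle is the necessity in (i): the sufficiency parts are bookkeeping on top of Theorems~\ref{thm:no-interlayer-delay}--\ref{thm:general-system}, whereas the necessity needs a crossing/continuation argument via Lemma~\ref{lem:3_2} rather than a direct estimate. I would also take care to reconcile the margin named $\tau_{\max}$ in (iii) with the tight same-delay margin $\frac{c}{\lambda_{\max}\zeta_{\max}}$ of Lemma~\ref{lemma:same-time-delay}, since the two expressions must be mutually consistent on the diagonal $\tau_1=\tau_2$.
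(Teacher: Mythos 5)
Your overall plan---compute the spectrum of $\m{A}_{\rm cross}$ and specialize Theorems~\ref{thm:no-interlayer-delay}, \ref{thm:cross-delay-free} and \ref{thm:general-system}---is exactly the route the paper takes implicitly (the corollary appears there with no proof at all), and in one place you are sharper than the paper: you correctly notice that for $a^{12}a^{21}>0$ the eigenvalue $-\sqrt{a^{12}a^{21}}$ has $a_k<0$, so Theorem~\ref{thm:cross-delay-free} does not cover it and item (ii) needs a separate crossing analysis. That fix does work: the first crossing of $s-\lambda_i\sqrt{a^{12}a^{21}}+\lambda_i e^{-\tau_1 s}$ occurs at $\tau_1=\arccos(a)/(\lambda_i\sqrt{1-a^2})$ with $a=\sqrt{a^{12}a^{21}}$, and $\arccos(a)/\sqrt{1-a^2}\geq \pi/(2(1+a))$ reduces to $x/\tan x\geq \pi/4$ on $(0,\pi/4]$. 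However, the two steps you defer are genuine gaps, not bookkeeping. The necessity in (i) cannot be completed by your magnitude/phase argument---or by any argument---because the magnitude relation $|\jmath\omega+\lambda_i e^{-\jmath\tau_1\omega}|=\lambda_i|\mu_k|$ does not involve $\tau_2$ and has \emph{no} real solution $\omega$ when $\tau_1$ is small: expanding gives $\lambda_i^2+\omega^2-2\lambda_i\omega\sin(\tau_1\omega)=\lambda_i^2|\mu_k|^2$, and since $\omega\sin(\tau_1\omega)\leq\tau_1\omega^2$, the left side is at least $\lambda_i^2+(1-2\lambda_i\tau_1)\omega^2\geq\lambda_i^2>\lambda_i^2|\mu_k|^2$ whenever $\tau_1\leq 1/(2\lambda_{\max})$. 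Hence for any $0<\tau_1\leq 1/(2\lambda_{\max})$ no root of any factor from Lemma~\ref{lm:poly} can ever reach the imaginary axis, for any $\tau_2\geq 0$, and Lemma~\ref{lem:3_2} (continuation from the stable delay-free point) yields consensus for \emph{all} $\tau_2\geq 0$. This refutes the parenthetical necessity claim; your instinct that this is the delicate step was right, but the correct resolution is that the claim is false (the paper never proves it either---its simulations only probe $\tau_1=0.23$, which exceeds $1/(2\lambda_{\max})$).

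Item (iii) is also not the ``pure specialization'' you assert. In the regime $-1<a^{12}a^{21}<0$ one has $a_k=0$ and $b:=\sqrt{|a^{12}a^{21}|}$, so $c=\pi/2-\arctan b$, $\zeta_{\max}=\sqrt{1+b^2}$, $\zeta_{\max}'=1+b$; Theorem~\ref{thm:general-system} therefore gives the margins $(\pi/2-\arctan b)/(\sqrt{2}\,\lambda_{\max}\sqrt{1+b^2})$ in case (i) and $(\pi/2-\arctan b)/(\lambda_{\max}(1+b))$ in case (ii), both \emph{strictly smaller} than the claimed $\tau_{\max}=(\pi/2)/(\lambda_{\max}(1+b))$ for every $b>0$. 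So invoking Theorem~\ref{thm:general-system} proves a strictly weaker statement than (iii). Worse, the reconciliation with Lemma~\ref{lemma:same-time-delay} that you flag as a point of care cannot be carried out: on the diagonal $\tau_1=\tau_2$ that lemma's necessary-and-sufficient margin specializes to $(\pi/2-\arctan b)/(\lambda_{\max}\sqrt{1+b^2})$, and for, e.g., $b=0.5$ this is $\approx 0.99/\lambda_{\max}$, strictly below $\tau_{\max}\approx 1.047/\lambda_{\max}$; thus (iii) asserts consensus for equal delays at which Lemma~\ref{lemma:same-time-delay} asserts instability. Statement (iii) as written is inconsistent with the paper's own Lemma~\ref{lemma:same-time-delay} once $|a^{12}a^{21}|$ is not small, so no specialization argument---yours or the paper's---can establish it; it needs either the corrected margin $(\pi/2-\arctan b)/(\lambda_{\max}\sqrt{1+b^2})$ (with a two-delay argument in the style of Theorem~\ref{thm:general-system}) or a restriction on $|a^{12}a^{21}|$.
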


Next, consider a network of 4 agents having the interaction graph as depicted in Fig.~\ref{fig:2layerNetw}. 

{\flushleft \emph{Simulations for intra-layer delay-free two-layer networks ($\tau_1 = 0$): }} We conduct several simulations of the consensus algorithm with $\m{A}_1 = \begin{bmatrix}
    1 & 1\\
    0.5 & 1
\end{bmatrix}$. In this case, $\m{A}$ has eigenvalues satisfying $|\mu_k|<1,~k=1,2$. The simulation results for $\tau_2 = 2, 5, 10$ are shown in Fig.~\ref{simul:twolayer_intralayer_delayfree}. Clearly, $x_i^k,~k=1,2,$ consent on two values (consensus states) in all three cases. 

Next, consider $\m{A}_2 = \begin{bmatrix}
    1 & 2\\
    0.5 & 1
\end{bmatrix}$. Correspondingly, $|\mu_k| = 1$. Simulation results in Fig.~\ref{simul:twolayer_intralayer_delayfree_A2} show that cross-layer time delays do not destabilize the system but perturb the system from the consensus set.

Finally, we consider $\m{A}_2 = \begin{bmatrix}
    1 & 2\\
    1 & 1
\end{bmatrix}$. In this case, $|\mu_k|>1$. Simulation results in Fig.~\ref{simul:twolayer_intralayer_delayfree_A3} show that cross-layer time delays destabilize the consensus system. 

{\flushleft \emph{Simulations for two time-delay networks ($\tau_1, \tau_2 \geq 0$): }} We simulate the two-layer network under the presence of two time-delays $\tau_1, \tau_2 \ge 0$ with matrix $\m{A}_1$. Corresponding to Corollary~1, we have $\tau_{\max} = 0.23$. The simulation result in Fig.~\ref{simul:twolayer_twodelays} shows that the states $x_i^k,~k = 1, 2,$ approach to two sinusoidal trajectories for $\tau_1 = \tau_2 = \tau_{\max}$ (i.e., the system has a pair of purely imaginary eigenvalues), two common constant values (or the consensus state) for $0< \tau_1 \le \tau_2 \leq \tau_{\max}$, and is unstable in case $\tau_1 = \tau_{\max}< \tau_2$ (Fig.~\ref{simul:twolayer_twodelays}(d,e)), or  $\tau_2 = \tau_{\max} <\tau_1$ (Fig.~\ref{simul:twolayer_twodelays}(f)). 

Thus, the simulation results are consistent with the theoretical analysis.

\section{Conclusions}
\label{sec:conclusion}
In this paper, we derived several consensus conditions for multilayer networks characterized by repeated agent-to-agent interaction patterns and two distinct time delays in both intra-layer and cross-layer interactions. Additionally, specific conditions were provided for the case of two-layer networks. Future work will address time delays in multilayer networks with more general interaction patterns and diverse network topologies.

\bibliographystyle{IEEEtran}
\bibliography{minh2024}  

\end{document}